\documentclass[11pt,letterpaper,reqno]{amsart}

\title[]{Shape Dynamics of $N$ Point Vortices on the Sphere}
\author{Tomoki Ohsawa}
\address{Department of Mathematical Sciences, The University of Texas at Dallas, 800 W Campbell Rd, Richardson, TX 75080-3021}
\email{tomoki@utdallas.edu}
\date{\today}

\keywords{Point vortices, Hamiltonian dynamics, Symplectic reduction, Lie--Poisson equation}
\subjclass[2020]{37J37,53D20,70G65,76B47}


\usepackage{graphicx,amssymb,paralist,subfigure,nicematrix}
\usepackage[margin=1in, marginpar=.5in]{geometry}

\usepackage{tikz-cd}

\usepackage[numbers,sort&compress]{natbib}

\usepackage[colorlinks=false]{hyperref}

\usepackage[nameinlink]{cleveref}

\theoremstyle{plain}
\newtheorem{theorem}{Theorem}[section]

\newtheorem{lemma}[theorem]{Lemma}
\newtheorem{proposition}[theorem]{Proposition}

\theoremstyle{definition}

\newtheorem{example}[theorem]{Example}

\theoremstyle{remark}
\newtheorem{remark}[theorem]{Remark}

\def\od#1#2{\frac{d#1}{d#2}}
\def\pd#1#2{\frac{\partial #1}{\partial #2}}
\def\fd#1#2{\frac{\delta #1}{\delta #2}}

\def\parentheses#1{\!\left(#1\right)}
\def\brackets#1{\!\left[#1\right]}
\def\braces#1{\!\left\{#1\right\}}

\def\tr{\mathop{\mathrm{tr}}\nolimits}

\def\diag{\operatorname{diag}}

\def\norm#1{{\left\|#1\right\|}}
\def\abs#1{\left|#1\right|}
\def\DS{\displaystyle}
\def\R{\mathbb{R}}
\def\C{\mathbb{C}}

\def\N{\mathbb{N}}
\def\defeq{\mathrel{\mathop:}=}
\def\eqdef{=\mathrel{\mathop:}}
\def\setdef#1#2{\left\{ #1 \ |\ #2 \right\}}
\def\ip#1#2{{\left\langle#1,#2\right\rangle}}

\renewcommand{\Re}{\operatorname{Re}}
\renewcommand{\Im}{\operatorname{Im}}

\def\rmi{{\rm i}}

\def\d{\mathbf{d}}
\def\ins#1{{\bf i}_{#1}}
\def\PB#1#2{\left\{#1,#2\right\}}
\newcommand\Ad{\operatorname{Ad}}
\newcommand\ad{\operatorname{ad}}


\def\SO{\mathsf{SO}}

\def\U{\mathsf{U}}
\def\SU{\mathsf{SU}}

\def\u{\mathfrak{u}}
\def\so{\mathfrak{so}}
\def\su{\mathfrak{su}}

\begin{document}

\footskip=.6in

\begin{abstract}
  We give a geometric account of the relative motion or the shape dynamics of $N$ point vortices on the sphere exploiting the $\mathsf{SO}(3)$-symmetry of the system. The main idea is to bypass the technical difficulty of the $\mathsf{SO}(3)$-reduction by first lifting the dynamics from $\mathbb{S}^{2}$ to $\mathbb{C}^{2}$. We then perform the $\mathsf{U}(2)$-reduction using a dual pair to obtain a Lie--Poisson dynamics for the shape dynamics. This Lie--Poisson structure helps us find a family of Casimirs for the shape dynamics. We further reduce the system by $\mathbb{T}^{N-1}$-symmetry to obtain a Poisson structure for the shape dynamics involving fewer shape variables than those of the previous work by Borisov and Pavlov. As an application of the shape dynamics, we prove that the tetrahedron relative equilibria are stable when all of their circulations have the same sign, generalizing some existing results on tetrahedron relative equilibria of identical vortices.
\end{abstract}

\maketitle

\section{Introduction}
\subsection{Dynamics of Point Vortices on Sphere}
Consider $N$ point vortices on the two-sphere $\mathbb{S}^{2}_{R} \subset \R^{3}$ with (fixed) radius $R > 0$ centered at the origin.
Let $\{ \mathbf{x}_{i} \in \mathbb{S}^{2}_{R} \}_{i=1}^{N}$ be the positions of the point vortices with circulations $\{ \Gamma_{i} \}_{i=1}^{N}$.
Then the equations of motion of the point vortices are
\begin{equation}
  \label{eq:vortices_on_sphere}
  \dot{\mathbf{x}}_{i}
  = \frac{1}{2\pi R} \sum_{\substack{1 \le j \le N\\j \neq i}} \Gamma_{j} \frac{ \mathbf{x}_{j} \times \mathbf{x}_{i} }{ |\mathbf{x}_{i} - \mathbf{x}_{j}|^{2} }
\end{equation}
for $i \in \{1, \dots, N\}$; see, e.g., \citet{Bo1977}, \citet{KiOk1987}, and \citet[Chapter~4]{Ne2001}.

This system of equations is Hamiltonian in the following sense:
Let $\Omega_{i}$ be the area form of the $i$-th copy of $\mathbb{S}^{2}_{R}$ and define the following two-form on $(\mathbb{S}^{2}_{R})^{N}$:
\begin{equation}
  \label{eq:Omega-S^2}
  \Omega_{\mathbb{S}^{2}_{R}} \defeq \sum_{i=1}^{N} \Gamma_{i} \pi_{i}^{*}\Omega_{i}
  \quad
  \text{with}
  \quad
  \Omega_{i}(\mathbf{x}_{i})(\mathbf{v}_{i}, \mathbf{w}_{i}) \defeq \frac{1}{R} \mathbf{x}_{i} \cdot (\mathbf{v}_{i} \times \mathbf{w}_{i})
\end{equation}
where $\pi_{i}\colon (\mathbb{S}^{2}_{R})^{N} \to \mathbb{S}^{2}_{R}$ is the projection to the $i$-th copy.
The corresponding Poisson bracket on $(\R^{3})^{N}$ (see \Cref{sec:recovering} for details) is, for all smooth $F, H\colon (\R^{3})^{N} \to \R$,
\begin{equation}
  \label{eq:PB-R3}
  \PB{F}{H}_{\R^{3}}(\mathbf{x}) \defeq \sum_{i=1}^{N}\frac{R}{\Gamma_{i}} \mathbf{x}_{i} \cdot \parentheses{ \pd{F}{\mathbf{x}_{i}} \times \pd{H}{\mathbf{x}_{i}} }.
\end{equation}
Define the Hamiltonians on $(\mathbb{S}^{2}_{R})^{N}$ and $(\R^{3})^{N}$ as follows:
\begin{subequations}
  \label{eq:H}
  \begin{equation}
    \label{eq:H-S^2}
    H_{\mathbb{S}^{2}_{R}}(\mathbf{x}_{1}, \dots, \mathbf{x}_{N})
    \defeq -\frac{1}{4\pi R^{2}} \sum_{1\le i < j \le N} \Gamma_{i} \Gamma_{j} \ln \parentheses{
      2 (R^{2} - \mathbf{x}_{i} \cdot \mathbf{x}_{j} )
    },
  \end{equation}
  and
  \begin{equation}
    \label{eq:H-R^3}
    H_{\R^{3}}(\mathbf{x}_{1}, \dots, \mathbf{x}_{N}) \defeq -\frac{1}{4\pi R^{2}} \sum_{1\le i < j \le N} \Gamma_{i} \Gamma_{j} \ln \parentheses{
      |\mathbf{x}_{i} - \mathbf{x}_{j}|^{2}
    }.
  \end{equation}
\end{subequations}
Note that the former is the restriction of the latter to $(\mathbb{S}^{2}_{R})^{N}$.
Then we obtain \eqref{eq:vortices_on_sphere} as a Hamiltonian system on $(\mathbb{S}^{2}_{R})^{N}$ or $(\R^{3})^{N}$ as follows:
\begin{equation}
  \label{eq:vortices_on_sphere-Hamiltonian}
  \ins{X}{ \Omega_{\mathbb{S}^{2}_{R}} } = \d H_{\mathbb{S}^{2}_{R}}
  \quad
  \text{or}
  \quad
  \dot{\mathbf{x}}_{i} = \PB{ \mathbf{x}_{i} }{ H_{\R^{3}} },
\end{equation}
where $X$ is a vector field on $(\mathbb{S}^{2}_{R})^{N}$.

\begin{remark}
  The Hamiltonians~\eqref{eq:H} have singularities at the collision points, i.e., $\mathbf{x}_{i} = \mathbf{x}_{j}$ with $i \neq j$.
  Following \citet[Remark~1.1]{Ki1988}, we will ignore this issue for now because the concrete expression for the Hamiltonian does not affect the geometry of our problem as long as it possesses the $\SO(3)$-symmetry described below.
  Once we obtain the Hamiltonian for the shape dynamics, we may remove the singularities by imposing conditions on the corresponding variables accordingly.
  Alternatively, one may also introduce a regularization parameter to remove the singularities from the outset; see, e.g., \citet{VaLe2014}.
\end{remark}

The dynamics of point vortices on the sphere has been studied quite extensively because it is not only interesting mathematically but also has geophysical and astrophysical applications.
For example, \citet{DiPo1998} and \citet{Ki1999} studied the motion of a vortex pair ($N = 2$), and \citet{KiNe1998} solved the equations of relative motion (see \eqref{eq:relative_motion} below) for $N = 3$.
\citet{BoLe1998} and \citet{Sa1999} studied the integrable three-vortex motions on the sphere; see also \citet{Sa2007} for an integrable four-vortex motion on sphere with zero moment of vorticity, and \citet{SaYa2008a,SaYa2008b} for studies on chaotic motions of $N$ point vortices on the sphere.

One can also generalize the basic equations~\eqref{eq:vortices_on_sphere} to those vortices on a rotating sphere, and their dynamics has been studied in, e.g., \citet{NeSh2006,JaNe2006,NeSa2007} and \citet{La2005}.

Stability of fixed and relative equilibria of point vortices on the sphere is one of the major topics of research as well.
The linear stability of rings of identical vortices was studied by \citet{PoDr1993}, and its nonlinear stability by \citet{BoCa2003} (see also \citet{LaMoRo2011}).
\citet{LiMoRo2001} proved the existence of a number of relative equilibria of identical vortices, and \citet{LaMoRo2011} studied nonlinear stability of many different types of relative equilibria involving one or two ring of vortices---each consisting of identical vortices---with and without one or two polar vortices; see also \citet{La2001} and \citet{BoSi2019}.
See also \cite{GaGa2022} for the existence of periodic orbits of $N$ identical vortices and small nonlinear oscillations near the Platonic solid equilibria, and also \citet{MoTo2013} for the bifurcation of the heptagon equilibrium with the Gaussian curvature being the parameter.

\subsection{Relative Motion and Shape Dynamics}
The focus of this paper is the relative motion or the \textit{shape dynamics} of the point vortices, i.e., we are interested in the set of equations that governs the evolution of the ``shape'' or relative positions of the point vortices---regardless of where the vortices are located on the sphere.
For example, for $N = 3$, it is the dynamics of the shape of the triangle formed by the three point vortices, regardless of its position and orientation on the sphere.

Defining the inter-vortex (Euclidean) distance
\begin{equation*}
  \ell_{ij} \defeq | \mathbf{x}_{i} - \mathbf{x}_{j} |
\end{equation*}
for $i, j \in \{1, \dots, N\}$ with $i \neq j$ and the (signed) volume
\begin{equation*}
  V_{ijk} \defeq \mathbf{x}_{i} \cdot (\mathbf{x}_{j} \times \mathbf{x}_{k})
\end{equation*}
of the parallelepiped formed by vectors $\mathbf{x}_{i}, \mathbf{x}_{j}, \mathbf{x}_{k}$ for $i,j,k \in \{1, \dots, N\}$ with $i \neq j \neq k$, we can derive the \textit{equations of relative motion}
\begin{equation}
  \label{eq:relative_motion}
  \od{}{t} \ell_{ij}^{2} = \frac{1}{\pi R} \sum_{\substack{1 \le k \le N\\k \neq i \neq j}}^{N} \Gamma_{k} V_{ijk} \left( \frac{1}{\ell_{jk}^{2}} - \frac{1}{\ell_{ki}^{2}} \right)
\end{equation}
from \eqref{eq:vortices_on_sphere}; see, e.g., \citet[Section~4.2]{Ne2001}.

\subsection{Hamiltonian Formulation of Shape Dynamics}
Given that the original equation~\eqref{eq:vortices_on_sphere} is a Hamiltonian system, a natural question to ask is whether the equations~\eqref{eq:relative_motion} of relative motion or shape dynamics are also a Hamiltonian system.
In fact, \citet{BoPa1998} derived the Poisson bracket for the above ``internal'' variables $\{ \ell_{ij} \}_{1\le i < j \le N} \cup \{ V_{ijk} \}_{1\le i < j < k \le N}$ in a direct manner from the Poisson bracket~\eqref{eq:PB-R3} for the original dynamics~\eqref{eq:vortices_on_sphere}.  

A more geometric perspective of this question is the following:
Intuitively, it is clear that the dynamics of $N$ point vortices governed by \eqref{eq:vortices_on_sphere} would have $\SO(3)$-symmetry under the rotational action
\begin{equation*}
  \SO(3) \times (\mathbb{S}^{2}_{R})^{N} \to (\mathbb{S}^{2}_{R})^{N};
  \qquad
  (A, (\mathbf{x}_{1}, \dots, \mathbf{x}_{N})) \mapsto ( A \mathbf{x}_{1}, \dots, A \mathbf{x}_{N}).
\end{equation*}
This action is clearly symplectic with respect to the symplectic form~\eqref{eq:Omega-S^2} because the volume form of each sphere is invariant under the rotational action.
One also sees that the Hamiltonian~\eqref{eq:H-S^2} is $\SO(3)$-invariant as well.
Taking the quotient by $\SO(3)$, we identify all the configurations of the vortices that are congruent to each other as a single ``shape''.
So if we could perform the symplectic reduction (see \citet{MaWe1974} and \cite[Sections~1.1 and 1.2]{MaMiOrPeRa2007}) of \eqref{eq:vortices_on_sphere} by the $\SO(3)$-symmetry, then the resulting reduced dynamics would be essentially the equations~\eqref{eq:relative_motion} of relative motion.
Such a geometric picture of shape dynamics has been also applied to the $N$-body problem of classical mechanics (see, e.g., \citet{Iw1987}, \citet{Mo2015}, and references therein) and also point vortices on the plane; see, e.g., \citet{KoPiRoGo1985}, \cite{BoBoMa1999}, and \cite{Oh2019d}.

Unfortunately, the reduction by $\SO(3)$-symmetry is quite intricate.
The momentum map associated with the above $\SO(3)$-action gives the following well-known invariant:
\begin{equation*}
  \mathbf{I}\colon (\mathbb{S}^{2}_{R})^{N} \to \so(3)^{*} \cong \R^{3};
  \qquad
  \mathbf{I}(\mathbf{x}_{1}, \dots, \mathbf{x}_{N}) \defeq \frac{1}{R}\sum_{i=1}^{N} \Gamma_{i} \mathbf{x}_{i}.
\end{equation*}
The difficulty is that the reduced space or the Marsden--Weinstein quotient $\mathbf{I}^{-1}(\mathbf{c})/\SO(3)_{\mathbf{c}}$ with $\mathbf{c} \in \R^{3}$ is tricky to work with when describing the reduced dynamics, where $\SO(3)_{\mathbf{c}}$ stands for the isotropy group $\setdef{ A \in \SO(3) }{ A \mathbf{c} = \mathbf{c} }$.
While \citet{Ki1988} found some topological invariants of the reduced space, the focus was rather on the topology of the space than the dynamics.
Indeed, it is difficult to find coordinates for the reduced space in general, and concrete treatments of the reduced dynamics are limited to some special cases; see \citet{PeMa1998} for the corresponding Poisson reduction in the special case with $N = 3$ and \citet{Li1998} for an explicit treatment of the reduction for $N = 4$.

\subsection{Main Results and Outline}
Our main contribution is the geometric treatment of the shape dynamics exploiting the $\SO(3)$-symmetry mentioned above.
Specifically, we proceed as follows to sidestep the difficulty of the $\SO(3)$-reduction; see also \Cref{fig:lift_and_reduce}.
\begin{enumerate}[\sf 1.]
\item We first lift the dynamics of vortices from $\mathbb{S}^{2}_{R}$ to $\C^{2}$ in \Cref{sec:lift}.
  For $N$ vortices, the lifted dynamics is then in $(\C^{2})^{N}$, which is identified with the space $\C^{2 \times N}$ of $2 \times N$ complex matrices.
  The lifted dynamics possesses a $\mathbb{T}^{N} \defeq \mathbb{S}^{1} \times \dots \times \mathbb{S}^{1}$ ($N$ copies)-symmetry, and the symplectic reduction by the symmetry recovers the vortex dynamics on the sphere; see \Cref{prop:reduction_of_lifted_dynamics}.
  \smallskip
\item In \Cref{sec:U(2)-reduction}, we perform a $\U(2)$-reduction of the lifted dynamics using a dual pair of \citet{SkVi2019} defined on $\C^{2 \times N}$.
  This essentially corresponds to the $\SO(3)$-reduction of the original dynamics because its $\SU(2)$ subgroup symmetry gives the $\SO(3)$-symmetry of the original dynamics on the sphere.
  The use of the dual pair facilitates the reduction, because the dual pair essentially allows one to embed the reduced space to the dual of a Lie algebra, yielding a Lie--Poisson equation for the reduced dynamics; see, e.g., \citet{We1983}, \citet[Section~IV.7]{LiMa1987}, and \citet[Chapter~11]{OrRa2004}.
  In other words, instead of having the reduced dynamics in a complicated quotient manifold, the reduced dynamics is given by an ODE on a vector space.
  It also helps us find a family of Casimirs associated with the Lie--Poisson structure; see \Cref{prop:Casimirs}.
  \smallskip
\item In \Cref{sec:further_reduction}, we further reduce the Lie--Poisson dynamics using the $\mathbb{T}^{N-1}$-symmetry to get rid of the extra symmetry picked up by the lifting\footnote{The reason why we have $\mathbb{T}^{N-1}$-symmetry as opposed to $\mathbb{T}^{N}$ is that one copy of $\mathbb{S}^{1}$ is taken care of in the $\U(2)$-reduction in the previous step.}.
  The resulting Poisson structure gives a Hamiltonian formulation of the shape dynamics; see \Cref{thm:main}.
\end{enumerate}

\begin{figure}[htbp]
  \centering
  \begin{tikzcd}[ampersand replacement=\&, column sep=5ex, row sep=9ex]
    \& (\C^{2})^{N} \cong \C^{2\times N} \arrow[swap, shift right=.5ex]{ld}{ \substack{\text{Reduction by $\mathbb{T}^{N}$}\\\text{(\Cref{ssec:S^1-reduction})}} \hspace{-1ex} } \arrow[red!55!black]{rd}{ \hspace{-2ex}\substack{ \text{Reduction by $\U(2)$} \\ \text{(\Cref{sec:U(2)-reduction})} }} \& \\
    (\mathbb{S}^{2}_{R})^{N} \arrow[swap, dashed]{rd}{\text{Reduction by $\SO(3)$}} \arrow[swap, red!55!black, shift right=0.5ex]{ur}{ \hspace{-3ex}\substack{ \text{Lifting}\\\text{(\Cref{ssec:lifting})} } } \& \& \mathcal{O} \subset \u(D_{\Gamma})^{*} \arrow[red!55!black]{ld}{ \substack{ \text{Reduction by $\mathbb{T}^{N-1}$} \\ \text{(\Cref{sec:further_reduction})} } } \\
    \& \text{shape space} \& 
  \end{tikzcd}
  \caption{Instead of reducing the dynamics on $(\mathbb{S}^{2}_{R})^{N}$ by $\SO(3)$ directly, first lift it to $(\C^{2})^{N}$ (which picks up $\mathbb{T}^{N}$-symmetry) and then apply reduction by $\U(2)$ (which is facilitated by a dual pair); this results in a Lie--Poisson dynamics in a coadjoint orbit $\mathcal{O} \subset \u(D_{\Gamma})^{*}$.
    We may then further reduce the system by $\mathbb{T}^{N-1}$-symmetry to get rid of the extra symmetry picked up in the lifting process.
  }
  \label{fig:lift_and_reduce}
\end{figure}
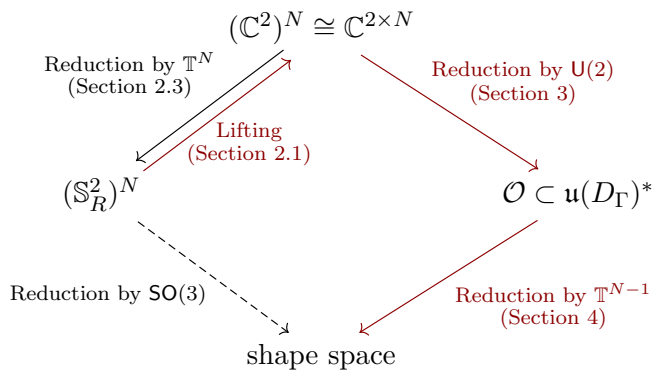

This geometric treatment results in fewer variables for the shape dynamics compared to those ``internal'' variables of \citet{BoPa1998}.
In fact, our shape dynamics is described using $(N - 1)^{2}$ variables, but it turns out that $(N - 1)(N - 2)/2$ of those implicitly depend on the rest; hence $N(N - 1)/2$ variables essentially.
On the other hand, the number of the ``internal'' variables of \cite{BoPa1998} is $N(N^{2} - 1)/6$.

Another advantage of our formulation is that we can find a family of Casimirs exploiting the underlying algebraic structure of the Lie--Poisson bracket on $\mathcal{O}$.
This is not easy with the Poisson bracket of \cite{BoPa1998} because they obtained it ``by hand'', i.e., the algebraic structure of their Poisson bracket is not clear.

Note that our parametrization does not in general give actual coordinate charts for the shape space.
Indeed, since the original dynamics is $2N$-dimensional, the shape space after the $\SO(3)$-reduction would \textit{not} have dimensions such as $(N - 1)^{2}$.
Instead, we sidestep the difficulty of directly dealing with the shape space by describing the shape dynamics using the $(N - 1)^{2}$ coordinates for the ambient space for the shape space.
The reason for the increase in the number of variables is that the $\U(2)$-reduced dynamics in the coadjoint orbit $\mathcal{O}$ (see \Cref{fig:lift_and_reduce}) is described in terms of the coordinates for $\u(D_{\Gamma})^{*}$; note that this is generally the case with Lie--Poisson dynamics.

Although this redundancy of shape variables is certainly a drawback, the resulting shape dynamics provides a means to analyze the stability of relative equilibria (i.e., the stability of the shape formed by the vortices).
To demonstrate this idea, we apply the energy--Casimir method to our shape dynamics with $N = 4$ and find a sufficient condition for the stability of tetrahedron relative equilibria in \Cref{sec:application}.
Our result concerns the non-identical case, i.e., $\Gamma_{1} \neq \Gamma_{2} \neq \Gamma_{3} \neq \Gamma_{4}$, and generalizes those results of \citet{Ku2004} and \citet{MeNeOs2010} for the identical case with $\Gamma_{1} = \Gamma_{2} = \Gamma_{3} = \Gamma_{4}$.
We also mention in passing that \citet{PeMa1998} used the energy--momentum method to find a sufficient condition for stability of non-identical equilateral triangle relative equilibria, i.e., $N = 3$ with $\Gamma_{1} \neq \Gamma_{2} \neq \Gamma_{3}$.
So our result is also an extension of theirs to $N = 4$ as well.

\section{Lifted Vortex Dynamics in $\C^{2}$}
\label{sec:lift}
We would like to first lift the vortex dynamics from $\mathbb{S}^{2}_{R}$ to $\C^{2}$.
This idea is inspired by \citet{VaLe2014}, where they lift the dynamics from $\mathbb{S}^{2}_{R}$ to $\mathbb{S}^{3}_{\sqrt{R}}$ via the Hopf fibration $\mathbb{S}^{3}_{\sqrt{R}} \to \mathbb{S}^{2}_{R}$.
We shall show that our approach naturally gives rise to the Hopf fibration by identifying the reduced space $\mathbb{S}^{2}_{R} = \mathbb{S}^{3}_{\sqrt{R}}/\mathbb{S}^{1}$ as a Marsden--Weinstein quotient.

\subsection{Vortex Equations in $\C^{2}$}
\label{ssec:lifting}
Let us show how the lifting from \Cref{fig:lift_and_reduce} works.
Since the Hopf fibration map gives rise to a map from $\C^{2}$ to $\R^{3}$, one may relate the distance in $\C^{2}$ with that in $\R^{3}$ as shown in \eqref{eq:C^2_distance-R^3} of \Cref{ssec:ip-C^2_R^3}.
Replacing the inter-vortex distance in $\R^{3}$ in the Hamiltonian~\eqref{eq:H-R^3} by the corresponding distance in $\C^{2}$ using \eqref{eq:C^2_distance-R^3}, we define a Hamiltonian $H\colon (\C^{2})^{N} \to \R$ as
\begin{equation}
  \label{eq:H-C^2}
  H(\boldsymbol{\varphi})
  \defeq
  -\frac{1}{4\pi R^{2}} \sum_{1\le i < j \le N} \Gamma_{i} \Gamma_{j} \ln \brackets{
    \left( \norm{\varphi_{i}}^{2} + \norm{\varphi_{j}}^{2} \right)^{2} - 4|\varphi_{i}^{*}\varphi_{j}|^{2}
  },
\end{equation}
where we used the shorthand
\begin{equation*}
  \boldsymbol{\varphi} = (\varphi_{1}, \dots , \varphi_{N}) \in (\C^{2})^{N},
\end{equation*}
and defined the norm $\norm{\varphi} \defeq \sqrt{\varphi^{*}\varphi}$ induced by the natural inner product on $\C^{2}$.
We also write
\begin{equation*}
  \varphi_{i} =
  \begin{bmatrix}
    z_{i} \\
    u_{i}
  \end{bmatrix}
  \quad
  \text{with}
  \quad
  z_{i}, u_{i} \in \C
  \quad
  \forall i \in \{1, \dots, N\}.
\end{equation*}

We define a symplectic form $\Omega$ on $(\C^{2})^{N}$ as follows:
\begin{equation*}
  \Omega \defeq -\frac{2}{R}\sum_{i=1}^{N} \Gamma_{i} \Im\parentheses{ \d\varphi_{i}^{*} \wedge \d\varphi_{i} },
\end{equation*}
or $\Omega = -\d\Theta$ with
\begin{equation}
  \label{eq:Theta-C2}
  \Theta \defeq -\frac{2}{R}\sum_{i=1}^{N} \Gamma_{i} \Im\parentheses{ \varphi_{i}^{*} \d\varphi_{i} }.
\end{equation}
Then the Hamiltonian vector field (``c.c.'' stands for the complex conjugate of the preceding term)
\begin{equation*}
  X = \dot{\varphi_{i}} \pd{}{\varphi_{i}} + \text{c.c.}
\end{equation*}
defined by the Hamiltonian system $\ins{X}{\Omega} = \d H$ gives the following Schr\"odinger-like lifted vortex equation on $\C^{2}$ for $i = 1, \dots, N$:
\begin{equation}
  \label{eq:lifted_dynamics}
  \Gamma_{i}\dot{\varphi}_{i} = -\frac{\rmi}{2} \pd{H}{\varphi_{i}^{*}}.
\end{equation}

\subsection{$\mathbb{T}^{N}$-symmetry and Momentum Map}
The above lifted vortex equations~\eqref{eq:lifted_dynamics} possesses a symmetry under the natural action of the torus
\begin{equation*}
  \mathbb{T}^{N} = ( \mathbb{S}^{1} )^{N}
  = \setdef{ e^{\rmi\boldsymbol{\theta}} \defeq (e^{\rmi\theta_{1}}, \dots, e^{\rmi\theta_{N}}) }{ \theta_{i} \in [0,2\pi) \text{ for } i = 1, \dots, N }
\end{equation*}
defined as
\begin{equation*}
  \mathbb{T}^{N} \times (\C^{2})^{N} \to (\C^{2})^{N};
  \qquad
  (e^{\rmi\boldsymbol{\theta}}, \boldsymbol{\varphi})
  \mapsto \parentheses{
    e^{\rmi\theta_{1}}\varphi_{1}, \dots, e^{\rmi\theta_{N}}\varphi_{N}
  }
  \eqdef e^{\rmi\boldsymbol{\theta}} \cdot \boldsymbol{\varphi}.
\end{equation*}
Indeed, one easily sees that the one-form~\eqref{eq:Theta-C2} and the Hamiltonian~\eqref{eq:H-C^2} are invariant under the action.
Let $\boldsymbol{\omega} \in (T_{1}\mathbb{S}^{1})^{N} \cong \R^{N}$.
Its corresponding infinitesimal generator is
\begin{equation*}
  \boldsymbol{\omega}(\boldsymbol{\varphi})
  = \left.\od{}{s} e^{\rmi s\boldsymbol{\omega}} \cdot \boldsymbol{\varphi} \right|_{s=0}
  = \rmi(\omega_{1}\varphi_{1}, \dots, \omega_{N}\varphi_{N}).
\end{equation*}
Hence the associated momentum map $\mathbf{J}\colon (\C^{2})^{N} \to (\R^{N})^{*} \cong \R^{N}$ satisfies
\begin{align*}
  \mathbf{J}(\boldsymbol{\varphi}) \cdot \boldsymbol{\omega}
  &= \Theta(\boldsymbol{\varphi}) \cdot \boldsymbol{\omega}(\boldsymbol{\varphi}) \\
  &= -\frac{2}{R} \sum_{i=1}^{N} \Gamma_{i} \Im\parentheses{ \rmi\,\omega_{i} \varphi_{i}^{*}\varphi_{i} } \\
  &= -\frac{2}{R} \sum_{i=1}^{N} \Gamma_{i} \omega_{i} \norm{\varphi_{i}}^{2} \\
  &= -\frac{2}{R} \parentheses{ \Gamma_{1}\norm{\varphi_{1}}^{2}, \dots, \Gamma_{N}\norm{\varphi_{N}}^{2} } \cdot \boldsymbol{\omega}.
\end{align*}
As a result, we obtain
\begin{equation*}
  \mathbf{J}(\boldsymbol{\varphi}) = -\frac{2}{R} \parentheses{ \Gamma_{1}\norm{\varphi_{1}}^{2}, \dots, \Gamma_{N}\norm{\varphi_{N}}^{2} }.
\end{equation*}

\subsection{$\mathbb{S}^{1}$-Reduction}
\label{ssec:S^1-reduction}
Now let us explain the $\mathbb{T}^{N}$-reduction part from \Cref{fig:lift_and_reduce}.
Since the Hamiltonian~\eqref{eq:H-C^2} is invariant under the above $\mathbb{T}^{N}$-action, its associated momentum map $\mathbf{J}$ is an invariant of \eqref{eq:lifted_dynamics}.
Therefore, setting $\boldsymbol{\Gamma} \defeq (\Gamma_{1}, \dots, \Gamma_{N}) \in \R^{N}$, the level set
\begin{equation*}
  \mathbf{J}^{-1}(-2\boldsymbol{\Gamma})
  = \mathbb{S}^{3}_{\sqrt{R}} \times \dots \times \mathbb{S}^{3}_{\sqrt{R}}
  = \parentheses{ \mathbb{S}^{3}_{\sqrt{R}} }^{N}
\end{equation*}
is an invariant manifold of the dynamics.
In fact, one can show the following:

\begin{proposition}
  \label{prop:reduction_of_lifted_dynamics}
  The symplectic reduction of $(\C^{2})^{N}$ by the above $\mathbb{T}^{N}$-symmetry yields the Marsden--Weinstein quotient
  \begin{equation*}
    \mathbf{J}^{-1}(-2\boldsymbol{\Gamma})/\mathbb{T}^{N}
    = \parentheses{ \mathbb{S}^{3}_{\sqrt{R}}/\mathbb{S}^{1} } \times \dots \times \parentheses{ \mathbb{S}^{3}_{\sqrt{R}}/\mathbb{S}^{1} }
    = \parentheses{ \mathbb{S}^{2}_{R} }^{N}.
  \end{equation*}
  In addition, the lifted dynamics~\eqref{eq:lifted_dynamics} is reduced to the point vortex dynamics~\eqref{eq:vortices_on_sphere} on $\parentheses{ \mathbb{S}^{2}_{R} }^{N}$.
\end{proposition}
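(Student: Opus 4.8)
The plan is to treat the two assertions of the proposition in turn: first pin down the reduced manifold, then the reduced vector field. For the manifold, I would read off from the momentum map $\mathbf{J}(\boldsymbol{\varphi}) = -\tfrac{2}{R}(\Gamma_{1}\norm{\varphi_{1}}^{2},\dots,\Gamma_{N}\norm{\varphi_{N}}^{2})$ that the condition $\mathbf{J}(\boldsymbol{\varphi}) = -2\boldsymbol{\Gamma}$ forces $\norm{\varphi_{i}}^{2} = R$ for each $i$ (using $\Gamma_{i}\neq 0$), so the level set is exactly $(\mathbb{S}^{3}_{\sqrt{R}})^{N}$. Since the symplectic form $\Omega$, the $\mathbb{T}^{N}$-action, and this level set all split as products over the $N$ copies of $\C^{2}$, the reduction decouples into $N$ identical $\mathbb{S}^{1}$-reductions of $\C^{2}$ at $\norm{\varphi}^{2} = R$. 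Each of these is the Hopf fibration $\mathbb{S}^{3}_{\sqrt{R}} \to \mathbb{S}^{3}_{\sqrt{R}}/\mathbb{S}^{1}\cong\mathbb{S}^{2}_{R}$, which I would realize concretely through the $\mathbb{S}^{1}$-invariant map $\varphi\mapsto\mathbf{x} = \varphi^{*}\boldsymbol{\sigma}\varphi\in\R^{3}$ ($\boldsymbol{\sigma}$ the Pauli matrices), noting $\abs{\mathbf{x}} = \norm{\varphi}^{2} = R$ so the image lies on $\mathbb{S}^{2}_{R}$. Taking the product of these maps yields the first displayed identity.

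For the dynamics I would avoid computing the reduced symplectic form abstractly and instead push the Hamiltonian field of \eqref{eq:lifted_dynamics} forward along the Hopf map. Differentiating $\mathbf{x}_{i} = \varphi_{i}^{*}\boldsymbol{\sigma}\varphi_{i}$ gives $\dot{\mathbf{x}}_{i} = 2\Re(\varphi_{i}^{*}\boldsymbol{\sigma}\dot{\varphi}_{i})$, into which I substitute $\dot{\varphi}_{i}$ from \eqref{eq:lifted_dynamics}. The computation hinges on two spinor identities, both consequences of $\sigma_{a}\sigma_{b} = \delta_{ab}I + \rmi\,\epsilon_{abc}\sigma_{c}$ together with the dyadic expansion $\varphi_{j}\varphi_{j}^{*} = \tfrac{1}{2}\norm{\varphi_{j}}^{2}I + \tfrac{1}{2}\mathbf{x}_{j}\cdot\boldsymbol{\sigma}$: the dot-product relation $\mathbf{x}_{i}\cdot\mathbf{x}_{j} = 2\abs{\varphi_{i}^{*}\varphi_{j}}^{2} - \norm{\varphi_{i}}^{2}\norm{\varphi_{j}}^{2}$, and the cross-product relation $\Im[(\varphi_{j}^{*}\varphi_{i})\,\varphi_{i}^{*}\boldsymbol{\sigma}\varphi_{j}] = \tfrac{1}{2}\,\mathbf{x}_{j}\times\mathbf{x}_{i}$ valid on $\norm{\varphi_{i}}^{2} = \norm{\varphi_{j}}^{2} = R$. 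The first identity, with \eqref{eq:C^2_distance-R^3}, collapses the bracket inside the logarithm of \eqref{eq:H-C^2} to $\abs{\mathbf{x}_{i}-\mathbf{x}_{j}}^{2}$ on the level set, so that $H$ descends exactly to $H_{\mathbb{S}^{2}_{R}}$; upon substitution the term proportional to $\mathbf{x}_{i}$ becomes purely imaginary and is killed by $\Re$, while the second identity turns the surviving term into the cross-product structure of the right-hand side of \eqref{eq:vortices_on_sphere}, with $R^{2}-\abs{\varphi_{i}^{*}\varphi_{j}}^{2}$ reducing to a multiple of $\ell_{ij}^{2}$. As an independent check one can instead invoke the Marsden--Weinstein theorem: $H$ descends to $H_{\mathbb{S}^{2}_{R}}$ as above, and the reduced symplectic form can be identified factorwise with $\Omega_{\mathbb{S}^{2}_{R}}$ of \eqref{eq:Omega-S^2} by pulling $\Omega$ back to $\mathbb{S}^{3}_{\sqrt{R}}$ in Hopf coordinates and verifying that it is basic and descends to $\Gamma_{i}$ times the area form on $\mathbb{S}^{2}_{R}$.

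I expect the only real difficulty to be the bookkeeping of constants, signs, and orientations, not the conceptual structure. One must reconcile the normalizations in \eqref{eq:lifted_dynamics} and in $\Omega$ with that of $\Omega_{\mathbb{S}^{2}_{R}}$, keep track of the factors of $R$ introduced by the Hopf map and by \eqref{eq:C^2_distance-R^3}, and fix the overall sign through the momentum-map convention $\iota_{X}\Omega = \d H$, so that the pushed-forward field is precisely \eqref{eq:vortices_on_sphere} rather than a nonzero scalar multiple of it; in particular the orientation the reduction induces on $\mathbb{S}^{2}_{R}$ must be checked against that of the area form $\Omega_{i}$. The remaining hypotheses needed to legitimize the quotient---that $-2\boldsymbol{\Gamma}$ is a regular value and that the compact group $\mathbb{T}^{N}$ acts freely (a nonzero $\varphi_{i}$ has trivial $\mathbb{S}^{1}$-stabilizer) and properly on $(\mathbb{S}^{3}_{\sqrt{R}})^{N}$---are routine, and I would dispatch them in a single sentence.
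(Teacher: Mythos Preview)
Your proposal is correct, but the paper organizes the argument differently. Rather than pushing the vector field forward by differentiating $\mathbf{x}_{i}=\varphi_{i}^{*}\boldsymbol{\sigma}\varphi_{i}$ and closing with Pauli-matrix identities, the paper introduces, alongside the $\mathbb{S}^{1}$-action, the natural $\SU(2)$-action $\varphi\mapsto U\varphi$ on each copy of $\C^{2}$ and observes that the resulting pair of momentum maps $(\mathbf{J},\mathbf{M})$ with $\mathbf{M}(\varphi)=-\tfrac{\rmi}{R}\Gamma(\varphi\varphi^{*}-\tfrac{1}{2}\norm{\varphi}^{2}I)\in\su(2)^{*}$ is a dual pair. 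This immediately identifies the Marsden--Weinstein quotient $\mathbf{J}^{-1}(-2\Gamma)/\mathbb{S}^{1}$ with a coadjoint orbit $\mathcal{O}\subset\su(2)^{*}$ carrying the $(+)$-KKS form, and the Hopf map you write down is precisely $\mathbf{M}$ under the identification $\su(2)^{*}\cong\R^{3}$. Pulling back the KKS form by $\overline{\mathbf{M}}\colon\mathbf{x}\mapsto\tfrac{\Gamma}{R}\mathbf{x}$ yields $\tfrac{\Gamma}{R}\,\mathbf{x}\cdot(\mathbf{v}\times\mathbf{w})$, i.e.\ $\Gamma$ times the area form, so the reduced symplectic structure is read off rather than verified in Hopf coordinates; the reduced dynamics then follows because $H_{\R^{3}}$ is the collective Hamiltonian $H_{\R^{3}}\circ\mathbf{M}=H$. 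What the dual-pair route buys is that the delicate orientation and normalization checks you rightly flag are absorbed into the KKS form and the equivariance of $\mathbf{M}$, whereas your direct computation is more self-contained and avoids invoking the dual-pair machinery at the cost of the spinor bookkeeping you anticipate.
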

\begin{proof}
  See \Cref{sec:recovering}.
\end{proof}

\section{$\U(2)$-Reduction of $N$-vortex Dynamics in $\C^{2}$}
\label{sec:U(2)-reduction}
This section corresponds to the $\U(2)$-reduction part in \Cref{fig:lift_and_reduce}.
The lifted dynamics turns out to possess a $\U(2)$-symmetry, and its $\SU(2)$ subgroup symmetry corresponds to the $\SO(3)$-symmetry of the original system on $\mathbb{S}^{2}_{R}$.
The advantage of the lifted dynamics is that the corresponding $\U(2)$-action on $\C^{2}$ is much more tractable compared to the $\SO(3)$-action on $\mathbb{S}^{2}_{R}$ when it comes to the symplectic reduction.
We exploit the dual pair of \citet{SkVi2019} to show that the $\U(2)$-reduced dynamics is a Lie--Poisson dynamics.

The upshot of this section is that we have a pair of momentum maps on $(\C^{2})^{N} \cong \C^{2\times N}$:
\begin{equation*}
  \begin{tikzcd}
    \u(2)^{*} & \C^{2\times N} \arrow[swap]{l}{\mathbf{K}} \arrow{r}{\mathbf{L}} & \u(N)_{\Gamma}^{*}.
  \end{tikzcd}
\end{equation*}
We shall explain the notation and the details along the way, but the dual pair implies that the reduction by $\U(2)$-symmetry of the lifted dynamics~\eqref{eq:lifted_dynamics} in $\C^{2\times N}$ yields a Lie--Poisson dynamics in $\u(N)_{\Gamma}^{*}$.

\subsection{$\U(2)$-Symmetry of Lifted $N$-vortex Dynamics}
Let us first identify $(\C^{2})^{N}$ with the space of $2 \times N$ complex matrices as follows:
\begin{equation*}
  (\C^{2})^{N} \to \C^{2 \times N};
  \qquad
  \boldsymbol{\varphi} = (\varphi_{1}, \dots , \varphi_{N}) \mapsto \Phi = [\varphi_{1} \dots \varphi_{N}].
\end{equation*}
Then we may rewrite the canonical one-form~\eqref{eq:Theta-C2} as
\begin{equation}
  \label{eq:Theta-C2N}
  \Theta(\Phi) = -\frac{2}{R}\Im\parentheses{ \tr\parentheses{ D_{\Gamma} \Phi^{*}\d\Phi } },
\end{equation}
where we defined
\begin{equation}
  \label{eq:D_Gamma}
  D_{\Gamma} \defeq \diag(\Gamma_{1}, \dots, \Gamma_{N})
  =
  \begin{bNiceMatrix}
    \Gamma_{1} & 0 & \Cdots & 0 \\
    0 & \Gamma_{2} & \Ddots & \Vdots \\
    \Vdots & \Ddots & \Ddots & 0 \\
    0 & \Cdots & 0 & \Gamma_{N}
  \end{bNiceMatrix}.
\end{equation}

Now consider the (left) $\U(2)$-action on $\C^{2 \times N}$ defined as
\begin{equation}
  \label{eq:U(2)-action}
  \U(2) \times \C^{2 \times N} \to \C^{2 \times N};
  \qquad
  (Y, \Phi) \mapsto Y\Phi.
\end{equation}
It is clear that this action leaves $\Theta$ invariant, and hence is a canonical action with respect to the symplectic form $\Omega = -\d\Theta$.
It is also easy to see that the Hamiltonian~\eqref{eq:H-C^2} is invariant under the action as well; hence $\U(2)$ is a symmetry group of the lifted dynamics~\eqref{eq:lifted_dynamics}.

\begin{remark}
  \label{rem:U(2)-SU(2)}
  As is well known, it is not the $\U(2)$-action but the $\SU(2)$-action on $\C^{2}$ that gives rise to the natural $\SO(3)$-action on $\R^{3}$.
  So the above $\U(2)$-symmetry does not exactly correspond to the rotational symmetry of the point vortices dynamics on $\mathbb{S}^{2}_{R}$.
  In fact, the above $\U(2)$-symmetry combines the global phase symmetry (see \Cref{rem:U(2)-T^{N-1}} below) and the rotational symmetry of the system.
  We perform the $\U(2)$-reduction here because the dual pair to be employed below is readily available with this setting, whereas it is unknown with $\SU(2)$.
\end{remark}

\begin{lemma}
  The momentum map $\mathbf{K}\colon \C^{2 \times N} \to \u(2)^{*}$ associated with the above $U(2)$-action~\eqref{eq:U(2)-action} is
  \begin{equation}
    \label{eq:K}
    \mathbf{K}(\Phi) = -\frac{\rmi}{R} \Phi D_{\Gamma} \Phi^{*}
    = -\frac{\rmi}{R} \sum_{i=1}^{N} \Gamma_{i} \varphi_{i} \varphi_{i}^{*}
    = -\frac{\rmi}{R} \sum_{i=1}^{N} \Gamma_{i}
    \begin{bmatrix}
      |z_{i}|^{2} & z_{i}\bar{u}_{i} \\
      \bar{z}_{i}u_{i} & |u_{i}|^{2}
    \end{bmatrix}.
  \end{equation}
\end{lemma}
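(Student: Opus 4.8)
The plan is to read off $\mathbf{K}$ directly from the canonical one-form $\Theta$, using the same recipe already applied to the torus momentum map $\mathbf{J}$. Because the $\U(2)$-action~\eqref{eq:U(2)-action} leaves $\Theta$ invariant, the map defined by
\[
  \langle \mathbf{K}(\Phi), \xi \rangle = \Theta(\Phi)\parentheses{ \xi_{\C^{2\times N}}(\Phi) }, \qquad \xi \in \u(2),
\]
is automatically an (equivariant) momentum map for $\Omega = -\d\Theta$, where $\xi_{\C^{2\times N}}$ is the infinitesimal generator; this spares me from verifying $\ins{\xi_{\C^{2\times N}}}\Omega = \d\langle\mathbf{K},\xi\rangle$ by hand. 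Thus the entire computation amounts to evaluating the right-hand side.

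First I would compute the generator of the left-multiplication action: identifying $\u(2)$ with the skew-Hermitian $2\times 2$ matrices ($\xi^* = -\xi$),
\[
  \xi_{\C^{2\times N}}(\Phi) = \left.\od{}{s}\right|_{s=0} e^{s\xi}\Phi = \xi\Phi.
\]
Feeding this and \eqref{eq:Theta-C2N} into the pairing and using cyclicity of the trace gives
\[
  \langle \mathbf{K}(\Phi), \xi \rangle
  = -\frac{2}{R}\Im\tr\parentheses{ D_{\Gamma}\Phi^{*}\xi\Phi }
  = -\frac{2}{R}\Im\tr\parentheses{ \xi\,\Phi D_{\Gamma}\Phi^{*} }.
\]

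The crux is converting the imaginary part of this trace into a bona fide pairing on $\u(2)^{*}$. Writing $M \defeq \Phi D_{\Gamma}\Phi^{*}$, the matrix $M$ is Hermitian since $D_{\Gamma}$ is real and diagonal; together with $\xi^{*} = -\xi$ this forces $\tr(\xi M)$ to be purely imaginary, whence $\Im\tr(\xi M) = -\rmi\,\tr(\xi M)$ and $\langle\mathbf{K}(\Phi),\xi\rangle = \frac{2\rmi}{R}\tr(\xi M)$. I would then fix the identification $\u(2)^{*} \cong \u(2)$ through the real, nondegenerate pairing $\langle\mu,\xi\rangle = -2\tr(\mu\xi)$ on skew-Hermitian matrices, under which the last display is precisely the pairing of $-\frac{\rmi}{R}M$ against $\xi$. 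Since $M$ is Hermitian, $-\rmi M$ is skew-Hermitian, so $\mathbf{K}(\Phi) = -\frac{\rmi}{R}\Phi D_{\Gamma}\Phi^{*}$ indeed lies in $\u(2)\cong\u(2)^{*}$, giving the first equality of \eqref{eq:K}. The remaining equalities follow by expanding the product columnwise as $\Phi D_{\Gamma}\Phi^{*} = \sum_{i=1}^{N}\Gamma_{i}\varphi_{i}\varphi_{i}^{*}$ and substituting $\varphi_{i} = \begin{tbmatrix} z_{i} \\ u_{i} \end{tbmatrix}$.

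I expect the sole obstacle to be this convention-matching: pinning down the identification of $\u(2)^{*}$ with (skew-)Hermitian matrices and the accompanying trace pairing so that the factor $-\rmi/R$ and the skew-Hermiticity of $\mathbf{K}(\Phi)$ emerge consistently. The underlying algebra is otherwise routine.
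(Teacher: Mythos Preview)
Your proposal is correct and follows essentially the same route as the paper: compute the infinitesimal generator $\xi_{\C^{2\times N}}(\Phi)=\xi\Phi$, evaluate $\Theta$ on it, use cyclicity of the trace and the fact that $\tr(\xi M)$ is purely imaginary for $M=\Phi D_{\Gamma}\Phi^{*}$ Hermitian, and then read off $\mathbf{K}$ via the trace pairing. Your pairing $\langle\mu,\xi\rangle=-2\tr(\mu\xi)$ is exactly the paper's inner product $\langle\xi,\eta\rangle=2\tr(\xi^{*}\eta)$ specialized to skew-Hermitian matrices, so the conventions match without issue.
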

\begin{proof}
  We equip $\u(2)$ with the inner product
  \begin{equation}
    \label{eq:ip-u2}
    \ip{\xi}{\eta} \defeq 2\tr(\xi^{*} \eta),
  \end{equation}
  and identify the dual $\u(2)^{*}$ with $\u(2)$ via this inner product.
  Since the infinitesimal generator of an arbitrary element $\xi \in \u(2)$ is
  \begin{equation*}
    \xi_{\C^{2 \times N}}(\Phi) = \xi\Phi,
  \end{equation*}
  the associated momentum map $\mathbf{K}\colon \C^{2 \times N} \to \u(2)^{*}$ satisfies
  \begin{align*}
    \ip{ \mathbf{K}(\Phi) }{ \xi }
    &= \Theta(\Phi) \cdot \xi_{\C^{2 \times N}}(\Phi) \\
    &= -\frac{2}{R}\Im\parentheses{ \tr\parentheses{ D_{\Gamma} \Phi^{*} \xi \Phi } } \\
    &= -\frac{2}{R}\Im\parentheses{ \tr\parentheses{ \Phi D_{\Gamma} \Phi^{*} \xi } } \\
    &= \frac{2}{R}\rmi \tr\parentheses{ \Phi D_{\Gamma} \Phi^{*} \xi } \\
    &= \frac{2}{R}\tr\parentheses{ (-\rmi \Phi D_{\Gamma} \Phi^{*})^{*}\xi } \\
    &= \ip{ -\frac{\rmi}{R} \Phi D_{\Gamma} \Phi^{*} }{ \xi }.
  \end{align*}
  Hence the expression~\eqref{eq:K} follows.
\end{proof}

\subsection{Lie Group $\U(D_{\Gamma})$ and Lie Algebras $\u(D_{\Gamma})$ and $\u(N)_{\Gamma}$}
\label{ssec:U(D_Gamma)-u(N)}
Using $D_{\Gamma}$ defined in \eqref{eq:D_Gamma}, let us also define a Lie group
\begin{equation*}
  \U(D_{\Gamma}) \defeq \setdef{ U \in \C^{N\times N} }{ U D_{\Gamma} U^{*} = D_{\Gamma} },
\end{equation*}
and its (right) action on $\C^{2 \times N}$:
\begin{equation*}
  \U(D_{\Gamma}) \times \C^{2 \times N} \to \C^{2 \times N};
  \qquad
  (U, \Phi) \mapsto \Phi U.
\end{equation*}
Again, it is clear that this action leaves $\Theta$ invariant as well; see \eqref{eq:Theta-C2N}.

The Lie algebra of $\U(D_{\Gamma})$ is given by
\begin{equation*}
  \u(D_{\Gamma}) \defeq \setdef{ \tilde{\zeta} \in \C^{N\times N} }{ \tilde{\zeta} D_{\Gamma} + D_{\Gamma} \tilde{\zeta}^{*} = 0 }.
\end{equation*}
Then we have the following \textit{vector space} isomorphism between $\u(D_{\Gamma})$ and the Lie algebra $\u(N)$ of the unitary group $\U(N)$:
\begin{equation*}
  \u(D_{\Gamma}) \to \u(N);
  \qquad
  \tilde{\zeta} \mapsto \tilde{\zeta} D_{\Gamma} \eqdef \zeta.
\end{equation*}
Note that this is \textit{not} a Lie algebra isomorphism.
However, we may equip $\u(N)$ with the modified Lie bracket
\begin{equation}
  \label{eq:commutator-u(N)}
  [\xi,\eta]_{\Gamma} \defeq \xi D_{\Gamma}^{-1} \eta - \eta D_{\Gamma}^{-1} \xi
\end{equation}
to define a Lie algebra $\u(N)_{\Gamma}$.
Then the above vector space isomorphism becomes a Lie algebra isomorphism between $\u(D_{\Gamma})$ (with the standard commutator) and $\u(N)_{\Gamma}$ with the modified Lie bracket~\eqref{eq:commutator-u(N)}.

Let us equip $\u(N)_{\Gamma}$ with the inner product in the same form as in \eqref{eq:ip-u2}, and identify the dual $\u(N)_{\Gamma}^{*}$ with $\u(N)_{\Gamma}$ via this inner product; hence we may identify $\u(D_{\Gamma})^{*}$ with $\u(N)_{\Gamma}$ as well.
Under this identification, the adjoint and coadjoint representations of $\U(D_{\Gamma})$ on $\u(N)_{\Gamma}$ and $\u(N)_{\Gamma}^{*}$ are
\begin{equation}
  \label{eq:Ad_and_Adstar-U(D_Gamma)}
  \Ad_{U}\xi = U \xi U^{*},
  \qquad
  \Ad_{U}^{*}\lambda = U^{*} \lambda U,
\end{equation}
and also the corresponding $\u(D_{\Gamma})$-representations are 
\begin{equation}
  \label{eq:ad_and_adstar-u(N)}
  \ad_{\xi}\eta = [\xi,\eta]_{\Gamma},
  \qquad
  \ad_{\xi}^{*}\lambda = \lambda \xi D_{\Gamma}^{-1} - D_{\Gamma}^{-1} \xi \lambda
\end{equation}
for every $U \in \U(D_{\Gamma})$, $\xi, \eta \in \u(N)_{\Gamma}$, and $\lambda \in \u(N)_{\Gamma}^{*}$.

\begin{lemma}
  The momentum map $\mathbf{L}\colon \C^{2\times N} \to \u(N)_{\Gamma}^{*}$ associated with the above $\U(D_{\Gamma})$-action is
  \begin{equation}
    \label{eq:L}
    \mathbf{L}(\Phi) = -\frac{\rmi}{R}\Phi^{*} \Phi
    = -\frac{\rmi}{R}
    \begin{bNiceMatrix}
      \norm{ \varphi_{1} }^{2} & \varphi^{*}_{1} \varphi_{2} & \Cdots & \varphi^{*}_{1} \varphi_{N} \\
      \varphi^{*}_{2} \varphi_{1}  & \norm{ \varphi_{2} }^{2} & \Ddots & \Vdots \\
      \Vdots & \Ddots & \Ddots & \varphi^{*}_{N-1} \varphi_{N} \\
      \varphi^{*}_{N} \varphi_{1} & \Cdots & \varphi^{*}_{N} \varphi_{N-1} & \norm{ \varphi_{N} }^{2}
    \end{bNiceMatrix}.
  \end{equation}
\end{lemma}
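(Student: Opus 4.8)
The plan is to repeat the momentum map computation of the preceding lemma, now for the right $\U(D_{\Gamma})$-action $(U,\Phi)\mapsto\Phi U$, keeping careful track of the two identifications $\u(D_{\Gamma})\cong\u(N)_{\Gamma}$ and $\u(N)_{\Gamma}^{*}\cong\u(N)_{\Gamma}$. First I would compute the infinitesimal generator of $\tilde{\zeta}\in\u(D_{\Gamma})$: since $\exp(s\tilde\zeta)\in\U(D_{\Gamma})$, differentiating the right action gives
\[
  \tilde\zeta_{\C^{2\times N}}(\Phi) = \left.\od{}{s}\Phi\exp(s\tilde\zeta)\right|_{s=0} = \Phi\tilde\zeta.
\]
Substituting $\d\Phi\mapsto\Phi\tilde\zeta$ into the one-form~\eqref{eq:Theta-C2N} then yields
\[
  \Theta(\Phi)\cdot\tilde\zeta_{\C^{2\times N}}(\Phi)
  = -\frac{2}{R}\Im\parentheses{\tr\parentheses{D_{\Gamma}\Phi^{*}\Phi\tilde\zeta}}.
\]

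The key step is to pass to $\u(N)_{\Gamma}$ via the vector space isomorphism $\tilde\zeta\mapsto\zeta\defeq\tilde\zeta D_{\Gamma}$, i.e., to substitute $\tilde\zeta=\zeta D_{\Gamma}^{-1}$. The cyclic invariance of the trace then cancels the $D_{\Gamma}^{\pm1}$ factors, leaving $\tr(\Phi^{*}\Phi\zeta)$; because $\Phi^{*}\Phi$ is Hermitian while $\zeta$ is skew-Hermitian, this trace is purely imaginary, so $-\Im(\,\cdot\,)=\rmi(\,\cdot\,)$, exactly as in the $\mathbf{K}$-computation. Rewriting the outcome through the inner product~\eqref{eq:ip-u2}, $\ip{\xi}{\eta}=2\tr(\xi^{*}\eta)$, and using $(-\rmi\Phi^{*}\Phi)^{*}=\rmi\Phi^{*}\Phi$, I would read off
\[
  \ip{\mathbf{L}(\Phi)}{\zeta} = \frac{2}{R}\rmi\tr\parentheses{\Phi^{*}\Phi\,\zeta},
  \qquad\text{hence}\qquad
  \mathbf{L}(\Phi) = -\frac{\rmi}{R}\Phi^{*}\Phi.
\]
The matrix form in~\eqref{eq:L} then follows by evaluating $(\Phi^{*}\Phi)_{ij}=\varphi_{i}^{*}\varphi_{j}$ entrywise.

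This is a direct transcription of the $\mathbf{K}$-calculation, so there is no genuine analytic obstacle; the one point demanding care is the bookkeeping of the weights $D_{\Gamma}$. The momentum map is initially defined by pairing with $\tilde\zeta\in\u(D_{\Gamma})$, yet the Lie--Poisson target is $\u(N)_{\Gamma}^{*}$, so the substitution $\tilde\zeta=\zeta D_{\Gamma}^{-1}$ must be made before identifying the result. It is precisely this substitution, together with cyclicity of the trace, that removes all $D_{\Gamma}$ dependence and produces the \emph{unweighted} Gram matrix $\Phi^{*}\Phi$ here, in contrast to the \emph{weighted} expression $\Phi D_{\Gamma}\Phi^{*}$ obtained for $\mathbf{K}$ under the left $\U(2)$-action.
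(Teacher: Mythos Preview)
Your proposal is correct and follows essentially the same route as the paper: compute the infinitesimal generator $\Phi\tilde\zeta$, insert it into $\Theta$, substitute $\tilde\zeta=\zeta D_{\Gamma}^{-1}$, use cyclicity of the trace to cancel the $D_{\Gamma}$ factors, observe that $\tr(\Phi^{*}\Phi\,\zeta)$ is purely imaginary, and identify the result through the inner product~\eqref{eq:ip-u2}. Your added justification for why the trace is purely imaginary (Hermitian times skew-Hermitian) and the contrast with the weighted expression for $\mathbf{K}$ are nice touches that the paper leaves implicit.
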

\begin{proof}
  The infinitesimal generator of an arbitrary element $\tilde{\zeta} \in \u(D_{\Gamma})$ is
  \begin{equation*}
    \tilde{\zeta}_{\C^{2 \times N}}(\Phi) = \Phi\tilde{\zeta} = \Phi \zeta D_{\Gamma}^{-1}.
  \end{equation*}
  Hence the associated momentum map $\mathbf{L}\colon \C^{2 \times N} \to \u(N)_{\Gamma}^{*}$ satisfies 
  \begin{align*}
    \ip{ \mathbf{L}(\Phi) }{ \zeta }
    &= \Theta(\Phi) \cdot \tilde{\zeta}_{\C^{2 \times N}}(\Phi) \\
    &= -\frac{2}{R}\Im\parentheses{ \tr\parentheses{ D_{\Gamma} \Phi^{*} \Phi\zeta D_{\Gamma}^{-1} } } \\
    &= -\frac{2}{R}\Im\parentheses{ \tr\parentheses{ \Phi^{*} \Phi\zeta } } \\
    &= \frac{2}{R}\rmi \tr\parentheses{ \Phi^{*} \Phi\zeta } \\
    &= \frac{2}{R}\tr\parentheses{ (-\rmi\Phi^{*} \Phi)^{*}\zeta } \\
    &= \ip{ -\frac{\rmi}{R}\Phi^{*} \Phi }{ \zeta },
  \end{align*}
  where we used the fact that $\tr\parentheses{ \Phi^{*} \Phi\zeta }$ is pure imaginary.
  Hence the expression~\eqref{eq:L} follows.
\end{proof}

\subsection{$\U(2)$-Reduction via a Dual Pair}
\label{ssec:U(2)-reduction}

\begin{proposition}
  The Hamiltonian reduction of the lifted dynamics~\eqref{eq:lifted_dynamics} by the $\U(2)$-symmetry yields the Lie--Poisson dynamics
  \begin{equation}
    \label{eq:LP-u(N)^*}
    \dot{\lambda} = \ad_{\delta h/\delta\lambda}^{*}\lambda
  \end{equation}
  in $\u(N)_{\Gamma}^{*}$, where $h\colon \u(N)_{\Gamma}^{*} \to \R$ is defined as
  \begin{equation}
    \label{eq:collectiveH}
    h(\lambda) \defeq
    -\frac{1}{4\pi R^{2}} \sum_{1\le i < j \le N} \Gamma_{i} \Gamma_{j} \ln \parentheses{
      R^{2} \parentheses{
        \frac{1}{2}\left( \lambda_{i} + \lambda_{j} \right)^{2} - |\lambda_{ij}|^{2}
      }
    }.
  \end{equation}
\end{proposition}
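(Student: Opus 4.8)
The plan is to avoid reducing the symplectic manifold $\C^{2\times N}$ by $\U(2)$ directly, and instead to exploit the dual pair of \citet{SkVi2019} together with the standard fact that the momentum map of a Hamiltonian action is a Poisson map onto the Lie--Poisson dual of the acting Lie algebra. The engine of the argument is the \emph{collectivization} principle: since the two legs $\mathbf{K}$ and $\mathbf{L}$ form a dual pair, the $\U(2)$-reduced dynamics of a $\U(2)$-invariant Hamiltonian is realized through $\mathbf{L}$ as the Lie--Poisson dynamics of the induced function $h$ on $\u(N)_\Gamma^{*}\cong\u(D_\Gamma)^{*}$. Concretely I would lean on \citet[Section~IV.7]{LiMa1987} and \citet[Chapter~11]{OrRa2004} for the reduction-to-Lie--Poisson statement and on \citet{SkVi2019} for the dual pair property.

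First I would record that the Hamiltonian \eqref{eq:H-C^2} is $\U(2)$-invariant: under $\varphi_i\mapsto Y\varphi_i$ with $Y\in\U(2)$ the building blocks $\norm{\varphi_i}^{2}=\varphi_i^{*}\varphi_i$ and $\abs{\varphi_i^{*}\varphi_j}^{2}$ are unchanged because $Y^{*}Y=I$. These are precisely the quantities encoded in the entries of $\mathbf{L}(\Phi)=-\tfrac{\rmi}{R}\Phi^{*}\Phi$ from \eqref{eq:L}, whose diagonal records $\norm{\varphi_i}^{2}$ and whose $(i,j)$ entry records $\varphi_i^{*}\varphi_j$. Hence $H$ factors through $\mathbf{L}$, i.e.\ $H=h\circ\mathbf{L}$, and substituting $\norm{\varphi_i}^{2}$ and $\varphi_i^{*}\varphi_j$ in terms of the components $\lambda_i$ and $\lambda_{ij}$ of $\lambda=\mathbf{L}(\Phi)$ into \eqref{eq:H-C^2}, then collecting the $R$-factors and the $\tfrac12$ coming from the normalization $\ip{\xi}{\eta}=2\tr(\xi^{*}\eta)$ of \eqref{eq:ip-u2}, yields the explicit expression \eqref{eq:collectiveH}. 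Next, because $\mathbf{L}$ is an equivariant momentum map for the $\U(D_\Gamma)$-action it is a Poisson map from $(\C^{2\times N},\Omega)$ to $\u(N)_\Gamma^{*}$ with its Lie--Poisson bracket; combined with $H=h\circ\mathbf{L}$ this forces the Hamiltonian vector field of $H$ to be $\mathbf{L}$-related to the Lie--Poisson vector field of $h$, so that $\lambda(t)=\mathbf{L}(\Phi(t))$ satisfies $\dot\lambda=\ad^{*}_{\delta h/\delta\lambda}\lambda$ with $\ad^{*}$ as in \eqref{eq:ad_and_adstar-u(N)}. Finally, the dual pair property promotes this push-forward to an actual reduction: the fibers of $\mathbf{L}$ on a level set of $\mathbf{K}$ are $\U(2)$-orbits, so $\mathbf{L}$ identifies the Marsden--Weinstein quotient with a coadjoint orbit $\mathcal{O}\subset\u(N)_\Gamma^{*}$ and the reduced flow is the Lie--Poisson flow restricted to $\mathcal{O}$.

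The step I expect to be the main obstacle is the last one---justifying that $\mathbf{L}$ genuinely \emph{realizes} the $\U(2)$-reduction rather than merely pushing trajectories forward. This is exactly where the dual pair of \citet{SkVi2019} is indispensable: one must invoke the mutual commutant condition relating the symplectic orthogonal of $\ker T\mathbf{K}$ to $\ker T\mathbf{L}$, so that $\U(2)$-invariance is equivalent to being $\mathbf{L}$-collective and $\U(2)$-orbits coincide with $\mathbf{L}$-fibers. A secondary but delicate point is the bookkeeping of signs and constants: the $\U(D_\Gamma)$-action is a \emph{right} action, so one must track which sign of the Lie--Poisson bracket arises and confirm it matches the form \eqref{eq:LP-u(N)^*}, and one must carry the factors from \eqref{eq:ip-u2} through the identification $\u(N)_\Gamma^{*}\cong\u(N)_\Gamma$ so that the numerical coefficients in \eqref{eq:collectiveH} come out correctly.
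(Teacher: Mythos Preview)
Your proposal is correct and follows essentially the same route as the paper: invoke the Skerritt--Vizman dual pair $(\mathbf{K},\mathbf{L})$, note that the $\U(2)$-invariant Hamiltonian $H$ is $\mathbf{L}$-collective so that $H=h\circ\mathbf{L}$ with $h$ given by \eqref{eq:collectiveH}, and use the dual pair to identify the Marsden--Weinstein quotient $\mathbf{K}^{-1}(\kappa_{0})/\U(2)_{\kappa_{0}}$ with a coadjoint orbit in $\u(N)_{\Gamma}^{*}$ equipped with the $(-)$-KKS structure (the sign coming from the right $\U(D_{\Gamma})$-action, as you flag). The paper's proof is terser---it cites \cite[Proposition~2.8]{SkVi2019} directly for the symplectomorphism to the coadjoint orbit rather than first arguing via the Poisson-map property of $\mathbf{L}$---but your more explicit unpacking of the collectivization step and the mutual-transitivity conditions amounts to the same argument.
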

\begin{proof}
  As alluded at the beginning of the section, the pair of momentum maps $\mathbf{K}$ and $\mathbf{L}$ form a dual pair in the sense of \citet{We1983}; see also \cite[Section~IV.7]{LiMa1987} and \cite[Chapter~11]{OrRa2004}.
  Specifically, the above $\U(2)$- and $\U(D_{\Gamma})$-actions along with the associated momentum maps $\mathbf{K}$ and $\mathbf{L}$ define so-called \textit{mutually transitive actions} on $\C^{2\times N}$ (see \citet{Sk2019} and \citet{SkVi2019}) in the following sense:
  (i)~The $\U(2)$-action and the $\U(D_{\Gamma})$-action commute; (ii)~they are symplectic actions; (iii)~the momentum maps $\mathbf{K}$ and $\mathbf{L}$ are equivariant; (iv)~each level set of $\mathbf{K}$ is a $\U(D_{\Gamma})$-orbit, and each level set of $\mathbf{L}$ is an $\U(2)$-orbit.
  In fact, this is essentially a special case of \citet[Section~3]{SkVi2019}.

  This implies the following (see, e.g., \cite[Proposition~2.8]{SkVi2019}):
  For every $\Phi_{0} \in \C^{2\times N}$, let $\kappa_{0} \defeq \mathbf{K}(\Phi_{0})$ and $\lambda_{0} \defeq \mathbf{L}(\Phi_{0})$; then the Marsden--Weinstein quotient $\mathbf{K}^{-1}(\kappa_{0})/\U(2)_{\kappa_{0}}$ is symplectomorphic to the coadjoint orbit $\mathcal{O}_{\lambda_{0}}$ passing through $\lambda_{0} \in \u(N)_{\Gamma}^{*}$, where $\mathcal{O}_{\lambda_{0}}$ is equipped with the $(-)$-Kirillov--Kostant--Souriau (KKS) symplectic structure (see, e.g., \citet[Chapter~1]{Ki2004} and \citet[Chapter~14]{MaRa1999} and references therein; note that the $\U(D_{\Gamma})$-action is a right action, and hence it is $(-)$-KKS): For every $\lambda \in \mathcal{O}_{\lambda_{0}} \subset \u(N)_{\Gamma}^{*}$ and $\xi, \eta \in \u(N)_{\Gamma}$,
  \begin{equation}
    \label{eq:KKS-u(N)}
    \Omega_{\mathcal{O}_{\lambda_{0}}}(\lambda)\parentheses{ -\ad_{\xi}^{*}\lambda, -\ad_{\eta}^{*}\lambda } \defeq -\ip{\lambda}{[\xi,\eta]_{\Gamma}}.
  \end{equation}

  This motivates us to set
  \begin{equation}
    \label{eq:lambda-Phi}
    \lambda
    = -\frac{\rmi}{2}
    \begin{bNiceMatrix}
      \sqrt{2}\lambda_{1} & \lambda_{12} & \Cdots & \lambda_{1N} \\
      \bar{\lambda}_{12}  & \sqrt{2}\lambda_{2} & \Ddots & \Vdots \\
      \Vdots & \Ddots & \Ddots & \lambda_{N-1,N} \\
      \bar{\lambda}_{1N} & \Cdots & \bar{\lambda}_{N-1,N} & \sqrt{2}\lambda_{N}
    \end{bNiceMatrix}
    = \mathbf{L}(\Phi),
  \end{equation}
  or in view of \eqref{eq:L},
  \begin{equation}
    \label{eq:lambda-varphi}
    \lambda_{i} = \frac{\sqrt{2}}{R}\, \norm{\varphi_{i}}^{2}
    \quad\text{for}\quad i = 1, \dots, N,
    \qquad
    \lambda_{ij} = \frac{2}{R} \varphi_{i}^{*} \varphi_{j}
    \quad\text{for}\quad 1\le i < j \le N.
  \end{equation}
  We also define a collective Hamiltonian~\cite{GuSt1980} $h$ so that $h \circ \mathbf{L} = H$ (see \eqref{eq:H-C^2} for an expression of $H$).
  Then, the reduced dynamics in the the Marsden--Weinstein quotient $\mathbf{K}^{-1}(\kappa_{0})/\U(2)_{\kappa_{0}}$ is equivalent to the Lie--Poisson dynamics~\eqref{eq:LP-u(N)^*} in $\u(N)_{\Gamma}^{*}$.
\end{proof}

\subsection{Lie--Poisson Bracket on $\u(N)_{\Gamma}^{*}$}
\label{ssec:LP-u(N)}
One may also write the Lie--Poisson equation~\eqref{eq:LP-u(N)^*} as
\begin{equation*}
  \dot{\lambda} = \PB{\lambda}{h},
\end{equation*}
where the Poisson bracket is the $(-)$-Lie--Poisson bracket on $\u(N)_{\Gamma}^{*}$ corresponding to the above symplectic form~\eqref{eq:KKS-u(N)}, i.e.,
\begin{equation}
  \label{eq:LPB-u(N)^*-0}
  \PB{f}{h}(\lambda) \defeq -\ip{\lambda}{ \brackets{\fd{f}{\lambda}, \fd{h}{\lambda}}_{\Gamma} }
\end{equation}
for all smooth $f, h \colon \u(N)_{\Gamma}^{*} \to \R$.
In this subsection, we would like to find a concrete expression for the bracket.

To that end, let us first define an orthonormal basis for $\u(N)_{\Gamma}$.
Let $e_{i} \in \R^{N}$ be the unit vector whose $i$-th component is 1, and define
\begin{gather*}
  \mathcal{D}_{i} \defeq -\frac{\rmi}{\sqrt{2}} e_{i}e_{i}^{T} = -\frac{\rmi}{\sqrt{2}} \diag(e_{i}) \quad\text{for}\quad i \in \{1, \dots, N\},\\
  \mathcal{E}_{ij} \defeq -\frac{\rmi}{2} \parentheses{ e_{i}e_{j}^{T} + e_{j}e_{i}^{T} }, \quad
  \mathcal{F}_{ij} \defeq \frac{1}{2} \parentheses{ e_{i}e_{j}^{T} - e_{j}e_{i}^{T} } \quad\text{for}\quad i,j \in \{1, \dots, N\}.
\end{gather*}
Note that $\mathcal{E}_{ii} = \sqrt{2}\mathcal{D}_{i}$ and $\mathcal{F}_{ii} = 0$ for $i \in \{1, \dots, N\}$.
One then sees that
\begin{equation*}
  \{ \mathcal{D}_{i} \}_{i=1}^{N} \cup \{ \mathcal{E}_{ij},\, \mathcal{F}_{ij} \}_{1\le i < j \le N}
\end{equation*}
forms a basis for $\u(N)_{\Gamma}$.
Hence we may write an arbitrary element $\xi \in \u(N)_{\Gamma}$ as follows:
\begin{align}
  \label{eq:xi-u(N)}
  \xi &= (\xi_{1}, \dots, \xi_{N}, \xi_{12}, \dots, \xi_{N-1,N}) \nonumber\\
      &= \sum_{i=1}^{N} \xi_{i} \mathcal{D}_{i}
        + \sum_{1\le i < j \le N} \parentheses{ (\Re\xi_{ij}) \mathcal{E}_{ij} + (\Im\xi_{ij}) \mathcal{F}_{ij} } \nonumber\\
      &= -\frac{\rmi}{2}
        \begin{bNiceMatrix}
          \sqrt{2}\xi_{1} & \xi_{12} & \Cdots & \xi_{1N} \\
          \bar{\xi}_{12}  & \sqrt{2}\xi_{2} & \Ddots & \Vdots \\
          \Vdots & \Ddots & \Ddots & \xi_{N-1,N} \\
          \bar{\xi}_{1N} & \Cdots & \bar{\xi}_{N-1,N} & \sqrt{2}\xi_{N}
        \end{bNiceMatrix}.
\end{align}
So we may identify $\u(N)_{\Gamma}$ with $\R^{N} \times \R^{N(N-1)} = \R^{N^{2}}$ as a vector space.

It is then straightforward calculations to see that the Lie bracket~\eqref{eq:commutator-u(N)} on $\u(N)_{\Gamma}$ satisfies the following for all $i, j, k, l \in \{1, \dots, N\}$:
\begin{gather*}
  [\mathcal{D}_{i}, \mathcal{E}_{jk}]_{\Gamma} = -\frac{\Gamma_{i}^{-1}}{\sqrt{2}}( \delta_{ij} \mathcal{F}_{ik} + \delta_{ik} \mathcal{F}_{ij} ),
  \qquad
  [\mathcal{D}_{i}, \mathcal{F}_{jk}]_{\Gamma} = \frac{\Gamma_{i}^{-1}}{\sqrt{2}}( \delta_{ij} \mathcal{E}_{ik} - \delta_{ik} \mathcal{E}_{ij} ),
  \\
  [\mathcal{E}_{ij}, \mathcal{E}_{kl}]_{\Gamma} = -\frac{1}{2}\parentheses{
    \Gamma_{i}^{-1}\parentheses{ \delta_{ik} \mathcal{F}_{jl} + \delta_{il} \mathcal{F}_{jk} }
    + \Gamma_{j}^{-1}\parentheses{ \delta_{jk} \mathcal{F}_{il} + \delta_{jl} \mathcal{F}_{ik} }
  },
  \\
  [\mathcal{F}_{ij}, \mathcal{F}_{kl}]_{\Gamma} = -\frac{1}{2}\parentheses{
    \Gamma_{i}^{-1}\parentheses{ \delta_{ik} \mathcal{F}_{jl} - \delta_{il} \mathcal{F}_{jk} }
    - \Gamma_{j}^{-1}\parentheses{ \delta_{jk} \mathcal{F}_{il} - \delta_{jl} \mathcal{F}_{ik} }
  },
  \\
  [\mathcal{E}_{ij}, \mathcal{F}_{kl}]_{\Gamma} = \frac{1}{2}\parentheses{
    \Gamma_{i}^{-1}\parentheses{ \delta_{ik} \mathcal{E}_{jl} - \delta_{il} \mathcal{E}_{jk} }
    + \Gamma_{j}^{-1}\parentheses{ \delta_{jk} \mathcal{E}_{il} - \delta_{jl} \mathcal{E}_{ik} }
  },
\end{gather*}
where we did \textit{not} assume Einstein's summation convention.
Note that the first two are in fact special cases of the third and the last ones, respectively, because $\mathcal{D}_{i} = \mathcal{E}_{ii}/\sqrt{2}$.

Using the coordinates for $\u(N)_{\Gamma}^{*} \cong \u(N)_{\Gamma}$ with respect to the above basis, we may write an arbitrary element $\lambda \in \u(N)_{\Gamma}^{*}$ using the coordinates $(\lambda_{1}, \dots, \lambda_{N}, \lambda_{12}, \dots, \lambda_{N-1,N})$ just as we did in \eqref{eq:xi-u(N)} for $\xi \in \u(N)_{\Gamma}$.
Then we may express the Lie--Poisson bracket~\eqref{eq:LPB-u(N)^*-0} as follows: For all $i, j, k, l \in \{1, \dots, N\}$,
\begin{equation}
  \label{eq:LPB-u(N)^*}
  \begin{array}{c}
    \DS \PB{ \lambda_{i} }{ \lambda_{j} } = 0,
    \qquad
    \DS \PB{ \lambda_{i} }{ \lambda_{jk} } = -\rmi\,\frac{\Gamma_{i}^{-1}}{\sqrt{2}}(\delta_{ij} \lambda_{ik} - \delta_{ik} \lambda_{ji}),
    \smallskip\\
    \DS \PB{ \lambda_{ij} }{ \lambda_{kl} } = \rmi \parentheses{ \Gamma_{i}^{-1} \delta_{il} \lambda_{kj} - \Gamma_{j}^{-1} \delta_{jk} \lambda_{il} }.
  \end{array}
\end{equation}

\begin{remark}
  \label{rem:lambda_{ji}}
  As one can see in \eqref{eq:xi-u(N)}, we do not use entries $\lambda_{ij}$ with $i \ge j$ explicitly as coordinates in $\u(N)_{\Gamma}$ or $\u(N)_{\Gamma}^{*}$, but such entries may appear in the above Poisson bracket formulas.
  However, one may \textit{define} $\lambda_{ij} \defeq \ip{ \lambda }{ \mathcal{E}_{ij} } + \rmi \ip{ \lambda }{ \mathcal{F}_{ij} }$ even if $i \ge j$.
  Then it follows that $\lambda_{ij} = \bar{\lambda}_{ji}$ if $i > j$ as well as that $\lambda_{ii} = \sqrt{2} \lambda_{i}$.
  So we may rewrite the above Poisson bracket formulas in terms of the coordinates for $\u(N)_{\Gamma}^{*}$.
\end{remark}

The above Lie--Poisson bracket has the following family of Casimirs:
\begin{proposition}
  \label{prop:Casimirs}
  \leavevmode
  \begin{enumerate}[(i)]
  \item For every $j \in \N$, the function $C_{j}\colon \u(N)_{\Gamma}^{*} \to \R$ defined by
    \begin{equation*}
      C_{j}(\lambda) \defeq \tr\parentheses{ (\rmi D_{\Gamma} \lambda)^{j} }
    \end{equation*}
    is a Casimir function for the Lie--Poisson bracket~\eqref{eq:LPB-u(N)^*}.
    \label{prop:Casimirs-1}
    \medskip
  \item Those Casimirs $C_{j}$ with $j \ge N$ can be expressed in terms of $\{ C_{j} \}_{j=1}^{N-1}$.
    \label{prop:Casimirs-2}
  \end{enumerate}
\end{proposition}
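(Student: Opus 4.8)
The plan is to establish (i) by recognizing each $C_j$ as a coadjoint invariant, and (ii) by exploiting a rank constraint hidden in the momentum map $\mathbf{L}$.

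For part (i), I would use the standard fact that the Casimir functions of a $(-)$-Lie--Poisson bracket are precisely the functions invariant under the coadjoint action; it therefore suffices to show $C_{j}\circ\Ad_{U}^{*} = C_{j}$ for every $U \in \U(D_{\Gamma})$. Starting from the coadjoint representation $\Ad_{U}^{*}\lambda = U^{*}\lambda U$ in \eqref{eq:Ad_and_Adstar-U(D_Gamma)}, I would invoke the defining relation $U D_{\Gamma} U^{*} = D_{\Gamma}$ of $\U(D_{\Gamma})$ (see \Cref{ssec:U(D_Gamma)-u(N)}), rewritten as $D_{\Gamma} U^{*} = U^{-1} D_{\Gamma}$, to obtain $\rmi D_{\Gamma}\,\Ad_{U}^{*}\lambda = \rmi D_{\Gamma} U^{*}\lambda U = U^{-1}(\rmi D_{\Gamma}\lambda)U$. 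Raising to the $j$-th power conjugates the whole expression, so $(\rmi D_{\Gamma}\,\Ad_{U}^{*}\lambda)^{j} = U^{-1}(\rmi D_{\Gamma}\lambda)^{j}U$, and cyclicity of the trace gives $C_{j}(\Ad_{U}^{*}\lambda)=C_{j}(\lambda)$. A small side check is that $C_{j}$ is genuinely real-valued: since $\lambda$ is skew-Hermitian and $D_{\Gamma}$ real one has $(\rmi D_{\Gamma}\lambda)^{*} = D_{\Gamma}^{-1}(\rmi D_{\Gamma}\lambda)D_{\Gamma}$, so $\rmi D_{\Gamma}\lambda$ is similar to its own adjoint and $\tr\!\parentheses{(\rmi D_{\Gamma}\lambda)^{j}}$ equals its complex conjugate. (Equivalently, one may differentiate the invariance at the identity to get the infinitesimal condition $\ad^{*}_{\delta C_{j}/\delta\lambda}\lambda = 0$, but the group-level computation is cleanest.)

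The crux of part (ii) is a rank bound. On the coadjoint orbit $\mathcal{O}$ carrying the reduced dynamics—i.e.\ on the image of $\mathbf{L}$—formula \eqref{eq:L} shows $\lambda = -\tfrac{\rmi}{R}\Phi^{*}\Phi$ with $\Phi \in \C^{2\times N}$, so $\Phi^{*}\Phi$ factors through $\C^{2}$ and $\lambda$ has rank at most $2$; this rank is preserved along $\mathcal{O}$ because $\Ad_{U}^{*}$ acts by the congruence $\lambda \mapsto U^{*}\lambda U$ with $U$ invertible. Hence $M \defeq \rmi D_{\Gamma}\lambda$ has rank at most $2$, and for $N \ge 3$ it is singular, so the constant term $e_{N} = \det M$ of its characteristic polynomial vanishes. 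This rank deficiency is exactly what makes the claim true: for a generic $N\times N$ matrix, $C_{N} = \tr(M^{N})$ is independent of $C_{1},\dots,C_{N-1}$, and without $e_{N}=0$ the statement would fail.

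With $e_{N}=0$ secured, I would finish with Newton's identities and Cayley--Hamilton. Newton's identities express $e_{1},\dots,e_{N-1}$ as polynomials in the power sums $C_{1},\dots,C_{N-1}$ (involving no $C_{k}$ with $k\ge N$). Since $e_{N}=0$, the characteristic polynomial has no constant term, so Cayley--Hamilton yields $M^{N}$ as a combination of $M,\dots,M^{N-1}$ (with no identity term) whose coefficients are $\pm e_{1},\dots,\pm e_{N-1}$; multiplying by $M^{\,j-N}$ and taking traces gives, for each $j\ge N$, a recursion $C_{j} = \sum_{k=1}^{N-1}(-1)^{k+1}e_{k}\,C_{j-k}$ in which every index $j-k\ge 1$. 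An induction on $j$ then writes every $C_{j}$ with $j\ge N$ as a polynomial in $C_{1},\dots,C_{N-1}$, which is the claim. I expect the only real obstacle to be pinning down the rank-$2$ constraint and the domain on which (ii) is asserted—namely the orbit $\mathcal{O}$ rather than all of $\u(N)_{\Gamma}^{*}$; once that is in place, both parts reduce to conjugation invariance of the trace and routine symmetric-function bookkeeping. (In fact rank $\le 2$ forces $e_{3}=\dots=e_{N}=0$, giving the sharper fact that every $C_{j}$ with $j\ge 3$ is already a polynomial in $C_{1},C_{2}$.)
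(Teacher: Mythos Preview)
Your treatment of part~(i) is essentially identical to the paper's: both verify that $C_{j}$ is real-valued and then establish $\Ad^{*}$-invariance via the defining relation $U D_{\Gamma} U^{*} = D_{\Gamma}$ and cyclicity of the trace, invoking the standard fact that coadjoint invariants are Casimirs.

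For part~(ii), however, your route is genuinely different from the paper's---and, in fact, your argument is the one that actually works. The paper argues on all of $\u(N)_{\Gamma}^{*}$: it applies Cayley--Hamilton to $A = \rmi D_{\Gamma}\lambda$, takes the trace, and uses the Faddeev--LeVerrier form $c_{N} = \tfrac{1}{N}\tr(A^{N}) + (\text{lower})$ to conclude that $\tr(A^{N})$ is determined by $\tr(A),\dots,\tr(A^{N-1})$. But this step is circular: the trace of the Cayley--Hamilton relation is precisely the Newton identity defining $c_{N}$, so upon substituting the Faddeev--LeVerrier expression for $c_{N}$ the $\tr(A^{N})$ terms cancel and one is left with a tautology. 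Indeed, for a generic $N\times N$ matrix the power sums $p_{1},\dots,p_{N}$ are algebraically independent, so $C_{N}$ cannot be a function of $C_{1},\dots,C_{N-1}$ on the whole dual space.

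Your rank-$\le 2$ observation is exactly what rescues the claim: on the coadjoint orbit $\mathcal{O}$ in the image of $\mathbf{L}$ one has $\lambda = -\tfrac{\rmi}{R}\Phi^{*}\Phi$ with $\Phi\in\C^{2\times N}$, forcing $e_{3}=\dots=e_{N}=0$ for $M=\rmi D_{\Gamma}\lambda$, and then Newton's identities plus Cayley--Hamilton give the recursion you describe. Your caveat that (ii) should be read as a statement on $\mathcal{O}$ rather than on all of $\u(N)_{\Gamma}^{*}$ is well taken, and your sharper conclusion---that on $\mathcal{O}$ every $C_{j}$ with $j\ge 3$ is already a polynomial in $C_{1}$ and $C_{2}$---is correct and stronger than what the proposition asserts.
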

\begin{proof}
  See \Cref{sec:Proof-Casimirs}.
\end{proof}

\section{Further Reduction by $\mathbb{T}^{N-1}$-symmetry}
\label{sec:further_reduction}
Let us now move on to the $\mathbb{T}^{N-1}$-reduction part in \Cref{fig:lift_and_reduce}.
Recall from \Cref{sec:lift} that the lifted dynamics picked up $\mathbb{T}^{N}$-symmetry.
We would like to get rid of this extra symmetry.

\subsection{$\mathbb{T}^{N-1}$-symmetry}
Consider the action
\begin{equation*}
  \mathbb{T}^{N-1} \times \u(N)_{\Gamma}^{*} \to \u(N)_{\Gamma}^{*}
\end{equation*}
defined by
\begin{multline}
  \label{eq:T^{N-1}-action}
  \parentheses{ (e^{\rmi\theta_{1}}, \dots, e^{\rmi\theta_{N-1}}), \lambda}
  \mapsto
  \Ad_{e^{-\rmi\tilde{\boldsymbol{\theta}}}}^{*} \lambda
  = e^{\rmi\tilde{\boldsymbol{\theta}}} \lambda e^{-\rmi\tilde{\boldsymbol{\theta}}} \\
  \footnotesize
  = -\frac{\rmi}{2}
  \begin{bNiceMatrix}
    \sqrt{2}\lambda_{1} & e^{\rmi(\theta_{1}-\theta_{2})}\lambda_{12} & \Cdots & e^{\rmi(\theta_{1}-\theta_{N-1})}\lambda_{1,N-1} & e^{\rmi\theta_{1}}\lambda_{1N} \smallskip\\
    e^{\rmi(\theta_{2}-\theta_{1})} \bar{\lambda}_{12} & \sqrt{2}\lambda_{2} & \Ddots & \Vdots & \Vdots \smallskip\\
    \Vdots & \Ddots & \Ddots & e^{\rmi(\theta_{N-2}-\theta_{N-1})}\lambda_{N-2,N-1} & \Vdots \smallskip\\
    e^{\rmi(\theta_{N-1}-\theta_{1})}\bar{\lambda}_{1,N-1} & \Cdots & e^{\rmi(\theta_{N-1}-\theta_{N-2})} \bar{\lambda}_{N-2,N-1} & \sqrt{2} \lambda_{N-1} & e^{\rmi\theta_{N-1}}\lambda_{N-1,N} \smallskip\\
    e^{-\rmi\theta_{1}} \bar{\lambda}_{1N} & \Cdots & e^{-\rmi\theta_{2}}\bar{\lambda}_{N-2,N} & e^{-\rmi\theta_{N-1}} \bar{\lambda}_{N-1,N} & \sqrt{2}\lambda_{N}
  \end{bNiceMatrix},
\end{multline}
where
\begin{equation*}
  e^{\rmi\tilde{\boldsymbol{\theta}}} \defeq \diag\parentheses{ e^{\rmi\theta_{1}}, \dots, e^{\rmi\theta_{N-1}}, 1 } \in \U(D_{\Gamma}).
\end{equation*}
Note that this action restricts to the coadjoint orbits because it is a coadjoint action by elements in $\U(D_{\Gamma})$; see \eqref{eq:Ad_and_Adstar-U(D_Gamma)}.

\begin{remark}
  \label{rem:U(2)-T^{N-1}}
  Why do we consider the $\mathbb{T}^{N-1}$-action instead of the more natural $\mathbb{T}^{N}$-action?
  It is because the above $\U(2)$-symmetry took into account an $\mathbb{S}^{1}$-symmetry out of the $\mathbb{T}^{N}$-symmetry already.
  This is the ``global'' $\mathbb{S}^{1}$-phase symmetry alluded in \Cref{rem:U(2)-SU(2)}:
  A part of the $\mathbb{T}^{N}$-symmetry is the invariance under the $\mathbb{S}^{1}$-action that changes the phase of the entire system by the same amount $\Phi \mapsto e^{\rmi\theta} \Phi$, but this is an $\mathbb{S}^{1}$ subgroup action of \eqref{eq:U(2)-action}.
  The above $\mathbb{T}^{N-1}$-action takes care of the rest of the $\mathbb{T}^{N}$-symmetry the lifted dynamics picked up.
\end{remark}

Clearly the symplectic structure~\eqref{eq:KKS-u(N)} and the collective Hamiltonian~\eqref{eq:collectiveH} are invariant under the above $\mathbb{T}^{N-1}$-action, and thus the $\U(2)$-reduced dynamics~\eqref{eq:LP-u(N)^*} has the $\mathbb{T}^{N-1}$-symmetry.
Let us find the associated momentum map.
Let
\begin{equation*}
  \omega = (\omega_{1}, \dots, \omega_{N-1},0) \in (T_{1}\mathbb{S}^{1})^{N-1} \cong \R^{N-1},  
\end{equation*}
and define
\begin{equation*}
  \omega_{\Gamma}
  \defeq
  (\Gamma_{1}\omega_{1}, \dots, \Gamma_{N-1}\omega_{N-1}, 0)
\end{equation*}
so that
\begin{equation*}
  D_{\omega} = \diag(\omega_{1}, \dots, \omega_{N-1}, 0),
  \qquad
  D_{\omega_{\Gamma}} = \diag(\Gamma_{1}\omega_{1}, \dots, \Gamma_{N-1}\omega_{N-1}, 0)
  = D_{\Gamma} D_{\omega} = D_{\omega} D_{\Gamma}.
\end{equation*}
Then the infinitesimal generator corresponding to $\omega$ is
\begin{align*}
  \omega_{\u(N)_{\Gamma}^{*}}(\lambda)
  &= \left.\od{}{s} \Ad_{e^{-\rmi s\omega}}^{*} \lambda \right|_{s=0} \\
  &= [\rmi D_{\omega}, \lambda] \\
  &= [\lambda, -\rmi D_{\omega}] \\
  &= \lambda (-\rmi D_{\omega}) D_{\Gamma} D_{\Gamma}^{-1} - D_{\Gamma}^{-1} D_{\Gamma} (-\rmi D_{\omega}) \lambda \\
  &= \lambda (-\rmi D_{\omega_{\Gamma}}) D_{\Gamma}^{-1} - D_{\Gamma}^{-1} (-\rmi D_{\omega_{\Gamma}}) \lambda \\
  &= \ad_{-\rmi D_{\omega_{\Gamma}}}^{*}\lambda,
\end{align*}
where we used the expression for $\ad^{*}$ in \eqref{eq:ad_and_adstar-u(N)}.
Then we see that $\omega_{\u(N)_{\Gamma}^{*}}(\lambda) = \ad_{\delta \mathcal{N}^{\omega}/\delta\lambda}^{*} \lambda$ with $\mathcal{N}^{\omega}\colon \u(N)_{\Gamma}^{*} \to \R$ defined by
\begin{equation*}
  \mathcal{N}^{\omega}(\lambda)
  = \ip{\lambda}{-\rmi D_{\omega_{\Gamma}}}
  = -2\rmi\tr(\lambda^{*} D_{\omega_{\Gamma}}) 
  = \sqrt{2}\,\sum_{i=1}^{N-1} \Gamma_{i} \lambda_{i} \omega_{i}.
\end{equation*}
The associated momentum map $\mathbf{N}\colon \u(N)_{\Gamma}^{*} \to (\R^{N-1})^{*} \cong \R^{N-1}$ then satisfies $N^{\omega}(\lambda) = \mathbf{N}(\lambda) \cdot \omega_{\Gamma}$, and thus we obtain
\begin{equation*}
  \mathbf{N}(\lambda) = \sqrt{2}\,(\Gamma_{1}\lambda_{1}, \dots, \Gamma_{N-1}\lambda_{N-1}).
\end{equation*}
These turn out to be trivial invariants in our setting because, in view of \eqref{eq:lambda-varphi} and using $\varphi_{i} \in \mathbb{S}^{3}_{\sqrt{R}}$, we have $\lambda_{i} = \sqrt{2}$ for every $i \in \{1, \dots, N\}$.

Note that the action~\eqref{eq:T^{N-1}-action} is not free.
However, if we restrict the action to the open subset
\begin{equation*}
  \mathring{\u}(N)_{\Gamma}^{*} \defeq \setdef{ \lambda \in \u(N)_{\Gamma}^{*} }{ \lambda_{ij} \neq 0 \text{ for all $i,j \in \{1, \dots, N\}$ with $i \neq j$ } },
\end{equation*}
then the action becomes free.
Its geometric interpretation is the following: If, for example, $\lambda_{12} = 0$ then $\varphi_{1}^{*}\varphi_{2} = 0$, and this along with \eqref{eq:S^3_distance-S^2} implies $| \mathbf{x}_{1} - \mathbf{x}_{2} | = 2R$, i.e., vortices 1 and 2 are in the antipodal points.
In this case, there is no well-defined geodesic connecting the two vortices on the sphere, and hence the ``shape'' of the vortices on the sphere is not well-defined.

\subsection{Reduction by $\mathbb{T}^{N-1}$-symmetry}
Let us define
\begin{equation*}
  \mu_{ijk} \defeq \lambda_{ij} \lambda_{ki} \lambda_{jk} \in \mathring{\C}
  \quad\text{with}\quad
  \mathring{\C} \defeq \C\backslash\{0\},
\end{equation*}
and, also as a shorthand,
\begin{equation}
  \label{eq:mu_ij}
  \mu_{ij} \defeq \mu_{ijN} = \lambda_{ij} \bar{\lambda}_{iN} \lambda_{jN} = \lambda_{ij} \lambda_{Ni} \lambda_{jN} \in \mathring{\C}
\end{equation}
for all $i, j \in \{1, \dots, N-1\}$ with $i < j$.
These variables provide an alternative parametrization of the entries $\{ \lambda_{ij} \}_{1 \le i < j \le N-1}$ of $\lambda$, i.e., those $(N-1)(N-2)/2$ entries of $\lambda$ in \eqref{eq:lambda-Phi} that are above the main diagonal except those in the last column.

Therefore, we may parametrize $\lambda \in \mathring{\u}(N)_{\Gamma}^{*}$ as follows:
\begin{equation*}
  \lambda = (
  \lambda_{1}, \dots, \lambda_{N},
  \lambda_{1N}, \dots, \lambda_{N-1,N},
  \mu_{12}, \dots, \mu_{N-2,N-1}
  )
  \in \R^{N} \times \mathring{\C}^{N-1} \times \mathring{\C}^{(N-1)(N-2)/2}.
\end{equation*}
Then the $\mathbb{T}^{N-1}$-action~\eqref{eq:T^{N-1}-action} becomes trivial on the variables $\{ \mu_{ij} \}_{1 \le i < j \le N-1}$, and hence we have
\begin{align*}
  \mathring{\u}(N)_{\Gamma}^{*}/\mathbb{T}^{N-1}
  &= \R^{N} \times \parentheses{ \mathring{\C}^{N-1}/\mathbb{T}^{N-1} } \times \mathring{\C}^{(N-1)(N-2)/2} \\
  &= \R^{N} \times \R_{+}^{N-1} \times \mathring{\C}^{(N-1)(N-2)/2} \\
  &= \braces{ \parentheses{
    \lambda_{1}, \dots, \lambda_{N},
    |\lambda_{1N}|, \dots, |\lambda_{N-1,N}|,
    \mu_{12}, \dots, \mu_{N-2,N-1}
    } }.
\end{align*}
Then the Poisson bracket on $\mathring{\u}(N)_{\Gamma}^{*}$ drops to the quotient by the standard Poisson reduction; see, e.g., \citet[Theorem~10.5.1]{MaRa1999}.
However, we may disregard $(\lambda_{1}, \dots, \lambda_{N})$ from the variables because
\begin{equation*}
  \lambda_{i} = \frac{\sqrt{2}}{R}\norm{\varphi_{i}}^{2} = \sqrt{2}
  \quad\text{for}\quad
  i = 1, \dots, N.
\end{equation*}
Also, since we have $|\lambda_{ij}|^{2} = 4 - (\ell_{ij}/R)^{2}$, we impose
\begin{equation*}
  0 < \ell_{ij} < 2R \iff 0 < |\lambda_{ij}| < 2
\end{equation*}
to avoid collisions and having vortices at antipodal points.
As a result, we have the following parametrization for the shape dynamics of $N$ point vortices:
\begin{align*}
  \mathcal{S}_{N} &\defeq
  (0,2)^{N-1} \times \mathring{\C}^{(N-1)(N-2)/2} \\
  &= \{ (
    |\lambda_{1N}|, \dots, |\lambda_{N-1,N}|,
    \mu_{12}, \dots, \mu_{N-2,N-1}
    ) \eqdef \zeta \},
\end{align*}
Note that the dimension of this manifold is $(N - 1)^{2}$, whereas the number of the ``internal'' variables $\{ \ell_{ij} \}_{1\le i < j \le N} \cup \{ V_{ijk} \}_{1\le i < j < k \le N}$ in \citet{BoPa1998} is $N(N^{2} - 1)/6$.

One can also show that
\begin{align*}
  \Re\mu_{ijk}
  &= \Re(\lambda_{ij} \lambda_{ki} \lambda_{jk}) \\
  &= \frac{8}{R^{3}}\Re\parentheses{ (\varphi_{i}^{*}\varphi_{j}) (\varphi_{k}^{*}\varphi_{i}) (\varphi_{j}^{*}\varphi_{k}) } \\
  &= \frac{4}{R^{2}} \parentheses{
    \abs{ \varphi_{i}^{*}\varphi_{j} }^{2}
    + \abs{ \varphi_{k}^{*}\varphi_{i} }^{2}
    + \abs{ \varphi_{j}^{*}\varphi_{k} }^{2} - R^{2}
    } \\
  &= |\lambda_{ij}|^{2} + |\lambda_{ki}|^{2} + |\lambda_{jk}|^{2} - 4,
\end{align*}
and
\begin{equation*}
  \Im\mu_{ijk} = \frac{2}{R^{3}} V_{ijk}
  \quad\text{with}\quad
  V_{ijk} \defeq \mathbf{x}_{i} \cdot (\mathbf{x}_{j} \times \mathbf{x}_{k}).
\end{equation*}
Thus, using \eqref{eq:mu_ij}, 
\begin{equation}
  \label{eq:Remu_ij}
  \Re\mu_{ij} = \Re\mu_{ijN}
  = \frac{ (\Re\mu_{ij})^{2} + (\Im\mu_{ij})^{2} }{ |\lambda_{iN}|^{2} |\lambda_{jN}|^{2} } + |\lambda_{iN}|^{2} + |\lambda_{jN}|^{2} - 4,
\end{equation}
and hence $\Re\mu_{ij}$ implicitly depends on $|\lambda_{iN}|$, $|\lambda_{jN}|$, and $\Im \mu_{ij}$.
Therefore, our shape dynamics is effectively defined on the $N(N - 1)/2$-dimensional manifold
\begin{equation*}
  (0,2)^{N-1} \times \R^{(N-1)(N-2)/2}
  = \{(
  |\lambda_{1N}|, \dots, |\lambda_{N-1,N}|,
  \Im\mu_{12}, \dots, \Im\mu_{N-2,N-1}
  ) \}.
\end{equation*}
However, practically speaking, it is simpler to retain $\{ \Re\mu_{ij} \}_{1\le i < j \le N-1}$ as independent variables and impose \eqref{eq:Remu_ij} as constraints instead.
In other words, we may define functions
\begin{equation}
  \label{eq:f_ij}
  f_{ij}\colon \mathcal{S}_{N} \to \R;
  \qquad
  f_{ij}(\zeta) \defeq \Re\mu_{ij} - \frac{ (\Re\mu_{ij})^{2} + (\Im\mu_{ij})^{2} }{ |\lambda_{iN}|^{2} |\lambda_{jN}|^{2} } - |\lambda_{iN}|^{2} - |\lambda_{jN}|^{2} + 4
\end{equation}
with $1\le i < j \le N-1$.
Then the shape dynamics is on the level set $\bigcap_{1\le i < j \le N-1} f_{ij}^{-1}(0)$.

\subsection{Shape Variables}
How do the variables for $\mathcal{S}_{N}$ determine the ``shape'' formed by the vortices?
In view of \eqref{eq:S^3_distance-S^2}, \eqref{eq:Im(mu_{123})/8}, and \eqref{eq:lambda-varphi}, we may relate our variables with the inter-vortex distances $\ell_{ij} \defeq | \mathbf{x}_{i} - \mathbf{x}_{j} |$ and the signed volume $V_{ijk} = \mathbf{x}_{i} \cdot (\mathbf{x}_{j} \times \mathbf{x}_{k})$ as follows:
\begin{equation}
  \label{eq:shape_variables}
  \begin{array}{c}
    \DS \ell_{ij}^{2} = 4\parentheses{ R^{2} - |\varphi_{i}^{*}\varphi_{j}|^{2} }
    = R^{2}(4 - |\lambda_{ij}|^{2})
    \iff
    |\lambda_{ij}|^{2} = 4 - \frac{\ell_{ij}^{2}}{R^{2}},
    \\
    \DS V_{ijk} = 4\Im\parentheses{ (\varphi_{i}^{*}\varphi_{j}) (\varphi_{k}^{*}\varphi_{i}) (\varphi_{j}^{*}\varphi_{k}) }
    = \frac{R^{3}}{2}\Im \mu_{ijk}
    \iff
    \Im \mu_{ijk} = \frac{2}{R^{3}} V_{ijk}.
  \end{array}
\end{equation}
Therefore, $|\lambda_{ik}|$ and $|\lambda_{jk}|$ specify the lengths $\ell_{ik}$ and $\ell_{jk}$ of two edges of the triangle formed by the vortices $\{i,j,k\}$, whereas $\Im\mu_{ijk}$ specifies the angle between the edges, thereby determining the triangle formed by the vortices $\{i,j,k\}$.
Note also that the above equations~\eqref{eq:shape_variables} give the relationship between our variables and the following ``internal'' variables of \cite{BoPa1998}:
\begin{equation*}
  \{ M_{ij} \defeq \ell_{ij}^{2} \}_{1\le i < j \le N} \cup \{ \Delta_{ijk} \defeq V_{ijk} \}_{1\le i < j < k \le N}.
\end{equation*}

Using the Lie--Poisson bracket~\eqref{eq:LPB-u(N)^*}, we find the Poisson structure on the space of our shape variables as follows:
For $1 \le i, j, k, l \le N-1$,
\begin{equation*}
  \PB{ |\lambda_{ij}|^{2} }{ |\lambda_{kl}|^{2} }
  = 2 \parentheses{
    \parentheses{ \frac{\delta_{ik}}{\Gamma_{i}} - \frac{\delta_{jk}}{\Gamma_{j}} } \Im\mu_{ijl}
    + \parentheses{ \frac{\delta_{il}}{\Gamma_{i}} - \frac{\delta_{jl}}{\Gamma_{j}} } \Im\mu_{ijk}
  }.
\end{equation*}
This along with \eqref{eq:shape_variables} recovers (2.11) of \citet{BoPa1998}\footnote{Note that our Poisson bracket~\eqref{eq:PB-R3} for the original dynamics is $R^{2}$ times their bracket and hence the difference by the factor $R^{2}$ carries over here as well.}.
In terms of our variables, we have, again using \eqref{eq:LPB-u(N)^*},
\begin{subequations}
  \label{eq:PB-shape_dynamics}
  \begin{equation}
    \PB{ |\lambda_{iN}|^{2} }{ |\lambda_{kN}|^{2} } = \frac{2}{\Gamma_{N}} \Im\mu_{ik}
    \quad\text{for}\quad
    1 \le i < k \le N-1.
  \end{equation}
  We also have, again using \eqref{eq:LPB-u(N)^*}, what correspond to (2.13) and (2.14) of \cite{BoPa1998} as follows:
  \begin{equation}
    \PB{ |\lambda_{iN}|^{2} }{ \mu_{kl} } =
    \begin{cases}
      \DS \rmi\parentheses{
        \frac{|\mu_{il}|^{2}}{\Gamma_{N} |\lambda_{lN}|^{2}} - \frac{|\lambda_{iN}|^{2} |\lambda_{lN}|^{2}}{\Gamma_{i}}
        - 2\parentheses{ \frac{1}{\Gamma_{N}} - \frac{1}{\Gamma_{i}} } \mu_{il}
      }
      & i \neq k \text{ and } i = l, \medskip\\
      \DS \rmi\,\frac{\mu_{kl}}{\Gamma_{N}}
      \parentheses{
        \frac{\mu_{li}}{|\lambda_{lN}|^{2}} - \frac{\mu_{ik}}{|\lambda_{kN}|^{2}}
      }
      & i \neq k \text{ and } i \neq l,
    \end{cases}
  \end{equation}
  as well as
  \begin{multline}
    \PB{ \mu_{ij} }{ \mu_{lm} }
    \\
    = \begin{cases}
      \DS \rmi\parentheses{
        \frac{1}{\Gamma_{i}}
        \parentheses{ |\lambda_{jN}|^{2} \mu_{im} - |\lambda_{mN}|^{2} \mu_{ij} }
        + \frac{1}{\Gamma_{N}|\lambda_{iN}|^{2}}
        \parentheses{
          \frac{|\mu_{im}|^{2} \mu_{ij}}{ |\lambda_{mN}|^{2} } - \frac{|\mu_{ij}|^{2} \mu_{im}}{ |\lambda_{jN}|^{2} }
        }
      }
      & i = l, j \neq m, \medskip\\
      \DS \rmi\parentheses{
        \frac{1}{\Gamma_{j}}
        \parentheses{ |\lambda_{lN}|^{2} \mu_{ij} - |\lambda_{iN}|^{2} \mu_{lj} }
        + \frac{1}{\Gamma_{N}|\lambda_{jN}|^{2}}
        \parentheses{
          \frac{|\mu_{ij}|^{2} \mu_{lj}}{ |\lambda_{iN}|^{2} } - \frac{|\mu_{jl}|^{2} \mu_{ij}}{ |\lambda_{lN}|^{2} }
        }
      }
      & i \neq l, j = m, \medskip\\
      \DS \rmi \parentheses{
        2\parentheses{ \frac{1}{\Gamma_{j}} - \frac{1}{\Gamma_{N}} }
        \frac{\mu_{ij} \mu_{jm}}{|\lambda_{jN}|^{2}}
        - \frac{|\lambda_{jN}|^{2}}{\Gamma_{j}}\mu_{im}
        - \frac{ \mu_{ij} \mu_{mi} \mu_{jm} }{ \Gamma_{N} |\lambda_{iN}|^{2} |\lambda_{mN}|^{2} }
      }
      & i < j = l < m.
      %
    \end{cases}
  \end{multline}
\end{subequations}

Rewriting the collective Hamiltonian~\eqref{eq:collectiveH} in terms of our variables, we have the Hamiltonian $\mathcal{H}\colon \mathcal{S}_{N} \to \R$ defined as
\begin{multline}
  \label{eq:mathcalH}
  \mathcal{H}(\zeta)
  \defeq -\frac{1}{4\pi R^{2}} \Biggl(
  \sum_{1\le i < j \le N-1} \Gamma_{i} \Gamma_{j} \ln \parentheses{
    R^{2} \parentheses{ 4 - \frac{ |\mu_{ij}|^{2} }{ |\lambda_{iN}|^{2} |\lambda_{jN}|^{2} } }
  } \\
  + \Gamma_{N} \sum_{1\le i \le N-1} \Gamma_{i} \ln \parentheses{
    R^{2} \parentheses{ 4 - |\lambda_{iN}|^{2} }
  }
  \Biggr).
\end{multline}

\begin{remark}
  The terms inside the logarithmic functions in the above Hamiltonian are positive if we exclude collisions and antipodal configurations.
  In fact, using the definition~\eqref{eq:mu_ij} of $\mu_{ij}$,
  \begin{equation*}
    |\mu_{ij}| = |\lambda_{ij}|\, |\lambda_{iN}|\, |\lambda_{jN}| < 2 |\lambda_{iN}|\, |\lambda_{jN}|
    \iff
    \frac{ |\mu_{ij}| }{ |\lambda_{iN}| |\lambda_{jN}| } < 2
  \end{equation*}
  because we impose $| \lambda_{ij} | \in (0,2)$ to avoid collisions and antipodal configurations of vortices $i$ and $j$.
  So $|\mu_{ij}|$ cannot take an arbitrary value in $\mathring{\C}$.
  However, we retain the definition of $\mathcal{S}_{N}$ as is for simplicity.
\end{remark}

Let us summarize our main result as follows:
\begin{theorem}
  \label{thm:main}
  \leavevmode
  \begin{enumerate}[(i)]
  \item The shape dynamics of $N$ point vortices on the sphere is the Hamiltonian dynamics on
    \begin{equation*}
      \mathcal{S}_{N} \defeq (0,2)^{N-1} \times \mathring{\C}^{(N-1)(N-2)/2} = \braces{ \parentheses{
          |\lambda_{1N}|, \dots, |\lambda_{N-1,N}|,
          \mu_{12}, \dots, \mu_{N-2,N-1}
        } \eqdef \zeta }
    \end{equation*}
    with respect to the Poisson bracket~\eqref{eq:PB-shape_dynamics} and the Hamiltonian~\eqref{eq:mathcalH}.
    \label{thm:main-1}    
    \medskip
  \item The level set $\DS\bigcap_{1\le i < j \le N-1} f_{ij}^{-1}(0)$ with $f_{ij}$ defined in \eqref{eq:f_ij} is an invariant manifold of the shape dynamics.
    \label{thm:main-f}
    \medskip
  \item The Casimirs $\{ C_{j} \}_{j\in\N}$ from \Cref{prop:Casimirs} are invariants of the shape dynamics.
    \label{thm:main-Casimirs}
  \end{enumerate}
\end{theorem}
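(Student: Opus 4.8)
The plan is to prove the three parts in the order (i), (iii), (ii): the first records the outcome of the reduction assembled in this section, the third is immediate from the Casimir property, and the last is most cleanly deduced once the momentum-map picture is available.

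For part~\ref{thm:main-1} I would simply collect the pieces constructed above. The $\U(2)$-reduction via the dual pair already yields the Lie--Poisson dynamics~\eqref{eq:LP-u(N)^*} on $\u(N)_{\Gamma}^{*}$ with collective Hamiltonian~\eqref{eq:collectiveH}. The KKS structure~\eqref{eq:KKS-u(N)} and this Hamiltonian are invariant under the $\mathbb{T}^{N-1}$-action~\eqref{eq:T^{N-1}-action}, and the action is free on the open set $\mathring{\u}(N)_{\Gamma}^{*}$; applying the Poisson reduction theorem~\cite[Theorem~10.5.1]{MaRa1999}, the bracket~\eqref{eq:LPB-u(N)^*} descends to the quotient, and the coordinate computations recorded in~\eqref{eq:PB-shape_dynamics} identify the reduced bracket in the variables $\zeta$. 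Since $\lambda_{i} = \sqrt{2}$ for each $i$ (equivalently, the $\lambda_{i}$ are the components of the momentum map $\mathbf{N}$, which is constant on the relevant level set), these components may be discarded, leaving exactly $\mathcal{S}_{N}$ with the reduced Hamiltonian~\eqref{eq:mathcalH}.

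For part~\ref{thm:main-Casimirs} I would use that each $C_{j}$ is a Casimir of~\eqref{eq:LPB-u(N)^*} by \Cref{prop:Casimirs}. As Casimirs they Poisson-commute with $h$ and are therefore conserved along~\eqref{eq:LP-u(N)^*}; moreover, being constant on coadjoint orbits, they are invariant under the $\mathbb{T}^{N-1}$-action, which is itself a coadjoint action by $e^{\rmi\tilde{\boldsymbol{\theta}}} \in \U(D_{\Gamma})$ (see~\eqref{eq:T^{N-1}-action}). Hence each $C_{j}$ descends to a well-defined function of $\zeta$, and since the reduction projection intertwines the two flows, the descended $C_{j}$ remains conserved under the shape dynamics.

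Part~\ref{thm:main-f} is where I expect the only genuine subtlety, and I would argue it through the momentum map $\mathbf{L}$. The relations $f_{ij}=0$ are not constraints imposed by hand: the computation leading to~\eqref{eq:Remu_ij} shows that whenever $\lambda = \mathbf{L}(\Phi)$ with $\varphi_{i} \in \mathbb{S}^{3}_{\sqrt{R}}$, the identity $f_{ij}(\zeta)=0$ holds automatically, so $\bigcap f_{ij}^{-1}(0)$ contains the image of $\mathbf{L}$ in the quotient. By part~\ref{thm:main-1} the shape dynamics is precisely the reduction of the lifted dynamics, so every one of its solutions lifts to $\mathbf{L}(\Phi(t))$ for a solution $\Phi(t)$ of~\eqref{eq:lifted_dynamics}; since $\mathbf{L}(\Phi(t)) \in \operatorname{im}\mathbf{L}$ for all $t$ and the $f_{ij}$ vanish there, every shape trajectory stays in $\bigcap f_{ij}^{-1}(0)$, which is exactly the invariance claim. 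The same conclusion follows intrinsically from the facts that $\mathbf{L}$ is equivariant, so $\operatorname{im}\mathbf{L}$ is a union of coadjoint orbits, and that the Lie--Poisson flow~\eqref{eq:LP-u(N)^*} preserves coadjoint orbits. The one point to verify with care is that~\eqref{eq:Remu_ij} holds throughout $\operatorname{im}\mathbf{L}$ and not merely at a chosen base point; this is immediate from the coordinate-free relation $\rmi D_{\Gamma}\lambda = \tfrac{1}{R} D_{\Gamma}\Phi^{*}\Phi$ read off from~\eqref{eq:L}, which exhibits the rank-$\le 2$ structure that~\eqref{eq:f_ij} encodes and which is manifestly $\Ad^{*}$-invariant.
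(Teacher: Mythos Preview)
Your proposal is correct and follows the paper's own approach: the paper presents \Cref{thm:main} as a summary of the constructions in Sections~\ref{sec:U(2)-reduction} and~\ref{sec:further_reduction} rather than giving a separate proof, and your write-up simply makes explicit the logical dependencies already laid out there. One small remark: in your final sentence on part~\ref{thm:main-f}, the identity~\eqref{eq:Remu_ij} is not purely a rank-$\le 2$ condition on $\lambda = \mathbf{L}(\Phi)$---the derivation also uses $\norm{\varphi_{i}}^{2}=R$ (equivalently $\lambda_{i}=\sqrt{2}$), so the invariant set is really the image of $\mathbf{L}$ restricted to $(\mathbb{S}^{3}_{\sqrt{R}})^{N}$, which is exactly what your main argument already establishes.
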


\begin{example}[The 3-Vortex Case: $N = 3$]
  The shape variables are given by
  \begin{equation*}
    \mathcal{S}_{3} \defeq
    (0,2)^{2} \times \mathring{\C} \\
    = \{ ( |\lambda_{13}|, |\lambda_{23}|, \mu_{12} ) \eqdef \zeta \}.
  \end{equation*}
  Here the shape of the vortices is given by a single triangle.
  As one can see from \eqref{eq:shape_variables}, the variables $|\lambda_{13}|$ and $|\lambda_{23}|$ specify the lengths $\ell_{13}$ and $\ell_{23}$ of the two edges at the vertex given by vortex 3, and $\Im\mu_{12}$ specifies the angle between the edges, hence determining the shape of the triangle.

  Therefore, $\Re\mu_{12}$ is redundant as a shape variable.
  Indeed it is implicitly defined in terms of $(|\lambda_{13}|, |\lambda_{23}|, \Im \mu_{12})$ because the level set $f_{12}^{-1}(0)$ with
  \begin{equation}
    \label{eq:Remu_12}
    f_{12}(\zeta) \defeq \Re\mu_{12}
    - \frac{ (\Re\mu_{12})^{2} + (\Im\mu_{12})^{2} }{ |\lambda_{13}|^{2} |\lambda_{23}|^{2} }
    - |\lambda_{13}|^{2} - |\lambda_{23}|^{2} + 4
  \end{equation}
  is an invariant manifold of the shape dynamics.
  
  The Poisson bracket is given by
  \begin{gather*}
    \PB{ |\lambda_{13}|^{2} }{ |\lambda_{23}|^{2} } = \frac{2}{\Gamma_{3}}\Im \mu_{12}, \\
    \PB{ |\lambda_{13}|^{2} }{ \mu_{12} }
    = \rmi \parentheses{
      |\lambda_{13}|^{2} \parentheses{ \frac{|\lambda_{12}|^{2}}{\Gamma_{3}} - \frac{|\lambda_{23}|^{2}}{\Gamma_{1}} }
      - 2\parentheses{ \frac{1}{\Gamma_{3}} - \frac{1}{\Gamma_{1}} } \mu_{12}
    }, \\
    \PB{ |\lambda_{23}|^{2} }{ \mu_{12} }
    = \rmi \parentheses{
      |\lambda_{23}|^{2} \parentheses{ \frac{|\lambda_{13}|^{2}}{\Gamma_{2}} - \frac{|\lambda_{12}|^{2}}{\Gamma_{3}} }
      - 2\parentheses{ \frac{1}{\Gamma_{2}} - \frac{1}{\Gamma_{3}} } \mu_{12}
    },
  \end{gather*}
  and the collective Hamiltonian~\eqref{eq:mathcalH} is
  \begin{equation*}
    \mathcal{H}(\zeta)
    = -\frac{1}{4\pi R^{2}} \parentheses{
      \Gamma_{1} \Gamma_{2} \ln \parentheses{
        R^{2}\parentheses{
          4 - \frac{ |\mu_{12}|^{2} }{ |\lambda_{13}|^{2} |\lambda_{23}|^{2} }
        }
      }
      + \Gamma_{3} \sum_{1\le i \le 2} \Gamma_{i} \ln \parentheses{
        R^{2}(4 - |\lambda_{i3}|^{2})
      }
    }.
  \end{equation*}
  The shape dynamics is then described as the Hamiltonian dynamics using the above Poisson bracket and this Hamiltonian.

  Let us find expressions for the Casimirs from \Cref{prop:Casimirs}.
  The first Casimir
  \begin{equation*}
    C_{1} = \frac{1}{\sqrt{2}}\sum_{i=1}^{3}\Gamma_{i} \lambda_{i} = \sum_{i=1}^{3}\Gamma_{i}
  \end{equation*}
  is a trivial one, and the second one
  \begin{equation*}
    C_{2} = \sum_{i=1}^{3}\Gamma_{i}^{2} + \frac{1}{2} \sum_{1 \le i < j \le 3} \Gamma_{i} \Gamma_{j} \abs{\lambda_{ij}}^{2}
    = \sum_{i=1}^{3}\Gamma_{i}^{2} + \frac{1}{2} \sum_{1 \le i < j \le 3} \Gamma_{i} \Gamma_{j} \parentheses{ 4 - \frac{\ell_{ij}^{2}}{R^{2}} }
  \end{equation*}
  is essentially the well-known invariant (see, e.g., \citet[Eq.~(4.2.6)]{Ne2001})
  \begin{equation*}
    \sum_{1\le i < j \le 3} \Gamma_{i} \Gamma_{j}\, \ell_{ij}^{2}.
  \end{equation*}
  Note that \Cref{prop:Casimirs}~\eqref{prop:Casimirs-2} implies that those Casimirs $C_{j}$ with $j \ge 3$ are not independent of $C_{1}$ and $C_{2}$.
\end{example}

\section{Application}
\label{sec:application}
\subsection{Tetrahedron Relative Equilibria}
Let us consider the special case with $N = 4$.
The shape variables in this case are
\begin{equation*}
  \{ \zeta = ( |\lambda_{14}|, |\lambda_{24}|, |\lambda_{34}|, \mu_{12}, \mu_{13}, \mu_{23} ) \} \in \mathcal{S}_{4} = (0,2)^{3} \times \mathring{\C}^{3}.
\end{equation*}
We are particularly interested in the stability of the tetrahedron relative equilibrium as shown in \Cref{fig:tetrahedron}.
\begin{figure}[htbp]
  \centering
  \includegraphics[width=.275\linewidth]{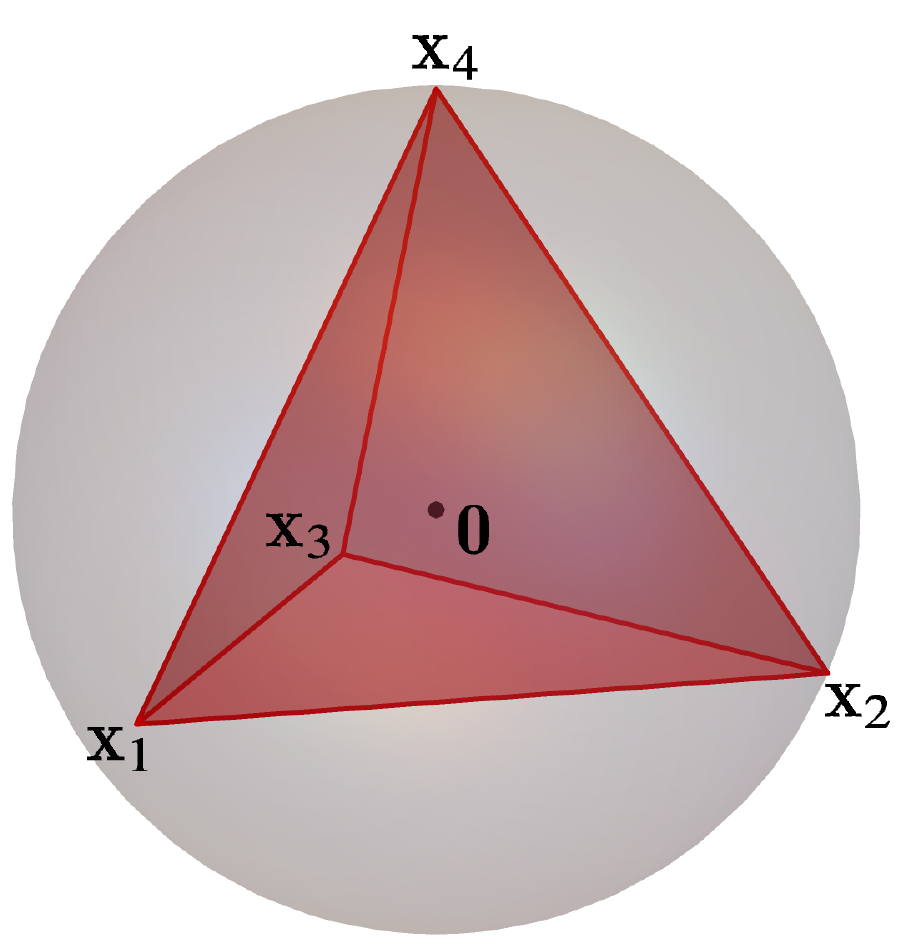}
  \caption{Tetrahedron relative equilibrium on the sphere.
    One may take, e.g., $\mathbf{x}_{1} = (2\sqrt{2}, 0, 1)$, $\mathbf{x}_{2} = (-\sqrt{2}, \sqrt{6}, -1)$, $\mathbf{x}_{3} = (-\sqrt{2}, -\sqrt{6}, -1)$, $\mathbf{x}_{4} = (0, 0, 3)$ with $R = 3$.}
  \label{fig:tetrahedron}
\end{figure}

Using our shape variables, let us set
\begin{equation*}
  |\lambda_{14}| = |\lambda_{24}| = |\lambda_{34}| = \frac{2}{\sqrt{3}},
  \qquad
  \mu_{12} = -\mu_{13} = \mu_{23} = \frac{8}{3\sqrt{3}}\,\rmi.
\end{equation*}
Notice that $\Im\mu_{13}$ is the negative of $\Im\mu_{12}$ and $\Im\mu_{23}$ because the orientation of the triangle formed by $(\mathbf{x}_{1}, \mathbf{x}_{3}, \mathbf{x}_{4})$ is the opposite of those by $(\mathbf{x}_{1}, \mathbf{x}_{2}, \mathbf{x}_{4})$ and $(\mathbf{x}_{2}, \mathbf{x}_{3}, \mathbf{x}_{4})$ as one can see (from the origin) in \Cref{fig:tetrahedron}.
It is easy to check that we then have
\begin{equation*}
  \ell_{12} = \ell_{13} = \ell_{14} = \ell_{23} = \ell_{24} = \ell_{34} = 2\sqrt{\frac{2}{3}}R.
\end{equation*}

\subsection{Stability of Tetrahedron Relative Equilibria}
We would like to find a sufficient condition for stability of the tetrahedron relative equilibria.
To our knowledge, existing stability results for tetrahedron equilibria are limited to some special cases:
(i)~Identical vortices, i.e., $\Gamma_{1} = \Gamma_{2} = \Gamma_{3} = \Gamma_{4}$; see \citet{Ku2004} and \citet{MeNeOs2010}.
(ii)~Linear stability condition with $\Gamma_{1} = \Gamma_{2} = -\Gamma_{3} = -\Gamma_{4}$; see \citet[Theorem~4.6]{La2001}.
(iii)~Lyapunov stability for $\Gamma_{1} = \kappa \neq 0$ and $\Gamma_{2} = \Gamma_{3} = \Gamma_{4} = 1$ with $\kappa > 0$ and linear instability with $\kappa < 0$; see \citet[Discussion of the case $n=3$ on p.~465]{LaMoRo2011}.

We would like to generalize some of these results to the non-identical case with $N = 4$ as follows:

\begin{proposition}
  \label{prop:stability}
  The tetrahedron configuration of four point vortices on the sphere is a stable equilibrium of the shape dynamics if all the circulations $\{ \Gamma_{i} \}_{i=1}^{4}$ have the same sign.
\end{proposition}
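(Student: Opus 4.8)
The plan is to establish Lyapunov stability of the tetrahedron equilibrium $\zeta_{e}$ via the energy--Casimir method. A preliminary reduction disposes of the sign: reversing the sign of every circulation leaves the Hamiltonian~\eqref{eq:mathcalH} unchanged, since each of its coefficients $\Gamma_{i}\Gamma_{j}$ is invariant, whereas it multiplies the Poisson bracket~\eqref{eq:PB-shape_dynamics} by $-1$ (every term carries a single factor $\Gamma_{k}^{-1}$) and hence merely reverses the direction of the flow. As a definite critical point of a conserved quantity yields stability in both time directions, it suffices to treat the case $\Gamma_{i} > 0$ for all $i$. I would first confirm that $\zeta_{e}$, given by $|\lambda_{14}| = |\lambda_{24}| = |\lambda_{34}| = 2/\sqrt{3}$ and $\mu_{12} = -\mu_{13} = \mu_{23} = (8/3\sqrt{3})\,\rmi$, is a fixed point of the shape dynamics, i.e.\ $\PB{\zeta}{\mathcal{H}} = 0$ at $\zeta_{e}$; owing to the equality of all edges and the coincidence up to sign of the face volumes, this is a short substitution into~\eqref{eq:PB-shape_dynamics} and~\eqref{eq:mathcalH}.

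Next I would form the conserved energy--Casimir function $\mathcal{H}_{\phi} \defeq \mathcal{H} + \phi$, where $\phi$ is a smooth function of the Casimirs $C_{j}$ of \Cref{prop:Casimirs}. Because $\zeta_{e}$ is an equilibrium, the Hamiltonian vector field $\PB{\cdot}{\mathcal{H}}$ vanishes there, so $\d\mathcal{H}(\zeta_{e})$ lies in the kernel of the Poisson tensor, i.e.\ in the span of the Casimir differentials $\d C_{j}(\zeta_{e})$. Consequently there is a choice of the first derivatives of $\phi$ at $\zeta_{e}$ for which $\d\mathcal{H}_{\phi}(\zeta_{e}) = 0$, making $\zeta_{e}$ a critical point of the conserved quantity $\mathcal{H}_{\phi}$; the tetrahedral symmetry, under which the edges and face volumes fall into single orbits, makes this choice explicit.

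The decisive step is the second-order analysis. I would compute the Hessian $\d^{2}\mathcal{H}_{\phi}(\zeta_{e})$ --- the second derivatives of $\phi$ at $\zeta_{e}$ remaining at my disposal --- and restrict it to the tangent space at $\zeta_{e}$ of the symplectic leaf through $\zeta_{e}$ (that is, $\bigcap_{j} \ker \d C_{j}$) lying inside the invariant manifold $\bigcap_{1\le i<j\le 3} f_{ij}^{-1}(0)$ of \Cref{thm:main}, the redundant variables $\Re\mu_{ij}$ being eliminated through the constraints~\eqref{eq:f_ij}. The claim is that, when all $\Gamma_{i} > 0$, this restricted quadratic form is definite. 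I expect this to be the main obstacle: it is a sizeable but entirely explicit linear-algebra computation at a single, highly symmetric point. My strategy would be to use the $S_{4}$-symmetry of the tetrahedron to block-diagonalize the Hessian into isotypic components and to check definiteness block by block; the hypothesis that the circulations share a sign is precisely what renders the leading contribution of each block a positive-weighted quadratic form, with weights assembled from the positive products $\Gamma_{i}\Gamma_{j}$, so that a suitable value of the free second derivatives of $\phi$ makes the whole form definite. Definiteness of the second variation of a conserved quantity at a critical point supplies a Lyapunov function on the invariant shape manifold, whence the asserted stability of the tetrahedron relative equilibrium.
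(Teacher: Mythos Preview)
Your high-level plan---energy--Casimir with $C_{2}$ and the constraints $f_{ij}$---matches the paper's, and the sign-reversal reduction to $\Gamma_{i}>0$ is a clean simplification the paper does not make explicit. The gap is in your second-order analysis: you propose to block-diagonalize the Hessian via the $S_{4}$-symmetry of the tetrahedron, but neither the Hamiltonian~\eqref{eq:mathcalH} nor the Poisson structure~\eqref{eq:PB-shape_dynamics} is $S_{4}$-invariant when the $\Gamma_{i}$ are unequal (each term carries an explicit $\Gamma_{i}\Gamma_{j}$ or $\Gamma_{k}^{-1}$). Hence $D^{2}\mathcal{H}_{\phi}(\zeta_{e})$ does not commute with the permutation action and does not split into isotypic blocks. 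Since the non-identical case is precisely the content of the proposition, this strategy fails where it matters; the vague appeal to ``positive-weighted quadratic forms with weights $\Gamma_{i}\Gamma_{j}$'' does not substitute for an actual definiteness check.

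The paper takes a more pedestrian but effective route. Instead of restricting to the leaf and the constraint manifold, it folds the $f_{ij}$ into the Lyapunov function, writing
\[
\mathcal{E}=\mathcal{H}+\tfrac{1}{\pi R^{2}}\bigl(\tfrac{1}{8}\Phi(C_{2}-C_{2}|_{\rm e})+\tfrac{3}{256}\Psi(f_{12},f_{13},f_{23})\bigr),
\]
and checks positive-definiteness of the full $9\times 9$ Hessian on $\mathcal{S}_{4}$ by Sylvester's criterion. With $\Phi'(0)=-3/2$, $D\Psi(0)=0$, $\Phi''(0)=0$, and $D^{2}\Psi(0)$ diagonal, the leading principal minors factor explicitly: $d_{1}=\Gamma_{1}(\Gamma_{2}+\Gamma_{3}+\Gamma_{4})$, $d_{2}=\Gamma_{1}\Gamma_{2}(\Gamma_{3}+\Gamma_{4})\sum_{i}\Gamma_{i}$, $d_{3}$ is another positive polynomial in the $\Gamma_{i}$, and $d_{4},\dots,d_{9}$ are positive multiples of $d_{1},d_{2},d_{3}$ times the free parameters $D^{2}_{ii}\Psi(0)$. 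All nine are manifestly positive when the $\Gamma_{i}$ share a sign and $D^{2}_{ii}\Psi(0)>0$. If you prefer your restriction-to-leaf approach, the honest linear-algebra computation is of the same size and cannot be shortcut by representation theory in the unequal-$\Gamma_{i}$ regime.
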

\begin{proof}
  First recall from \Cref{thm:main}~(\ref{thm:main-f}) that the variables $\{ \Re\mu_{ij} \}_{1\le i < j \le 3}$ depend on the rest of the variables implicitly as 
  \begin{equation*}
    f_{ij}(\zeta) \defeq \Re\mu_{ij} - \frac{ (\Re\mu_{ij})^{2} + (\Im\mu_{ij})^{2} }{ |\lambda_{iN}|^{2} |\lambda_{jN}|^{2} } - |\lambda_{iN}|^{2} - |\lambda_{jN}|^{2} + 4
    = 0
    \quad
    \text{with}
    \quad
    1\le i < j \le 3.
  \end{equation*}

  The Hamiltonian for the shape dynamics is
  \begin{equation*}
    \mathcal{H}(\zeta) \defeq -\frac{1}{4\pi R^{2}} \parentheses{
      \sum_{1\le i < j \le 3} \Gamma_{i} \Gamma_{j} \ln \parentheses{
        R^{2}\parentheses{
          4 - \frac{ |\mu_{ij}|^{2} }{ |\lambda_{i4}|^{2} |\lambda_{j4}|^{2} }
        }
      }
      + \Gamma_{4} \sum_{1\le i \le 3} \Gamma_{i} \ln \parentheses{
        R^{2} (4 - |\lambda_{i4}|^{2})
      }
    },
  \end{equation*}
  and, recall from \Cref{thm:main}~(\ref{thm:main-Casimirs}) (see also \Cref{prop:Casimirs}) that the shape dynamics possesses a family of Casimirs.
  Particularly, we have
  \begin{align*}
    C_{2}(\zeta)
    \defeq \sum_{i=1}^{4}\Gamma_{i}^{2}
    + \frac{1}{2} \sum_{1 \le i < j \le 3} \Gamma_{i} \Gamma_{j} \frac{ |\mu_{ij}|^{2} }{ |\lambda_{i4}|^{2} |\lambda_{j4}|^{2} }
    + \frac{\Gamma_{4}}{2} \sum_{1 \le i \le 3} \Gamma_{i} \abs{\lambda_{i4}}^{2}.
  \end{align*}

  We would like to use the energy--Casimir method using $\{ f_{ij} \}_{1\le i < j \le 3}$ as additional invariants as well.
  Specifically, let us write the tetrahedron (relative) equilibrium as 
  \begin{equation*}
    \zeta_{\rm e} \defeq \parentheses{
      \frac{2}{\sqrt{3}},\,
      \frac{2}{\sqrt{3}},\,
      \frac{2}{\sqrt{3}},\,
      \frac{8}{3\sqrt{3}}\,\rmi,\,
      -\frac{8}{3\sqrt{3}}\,\rmi,\,
      \frac{8}{3\sqrt{3}}\,\rmi
    },
  \end{equation*}
  and seek a Lyapunov function of the form
  \begin{equation*}
    \mathcal{E}(\zeta)
    \defeq \mathcal{H}(\zeta)
    + \frac{1}{\pi R^{2}} \parentheses{
      \frac{1}{8}\Phi(C_{2}(\zeta) - C_{2}|_{\rm e})
      + \frac{3}{256}\Psi(f_{12}(\zeta), f_{13}(\zeta), f_{23}(\zeta))
    }
  \end{equation*}
  with some smooth functions $\Phi\colon \R \to \R$ and $\Psi\colon \R^{3} \to \R$, where $C_{2}|_{\rm e} \defeq C_{2}(\zeta_{\rm e})$.
  Then $\mathcal{E}$ is an invariant of the shape dynamics because $\mathcal{H}$, $C_{2}$, and $\{ f_{ij} \}_{1\le i < j \le 3}$ are all invariants.
  
  It is then a straightforward computation to show that the gradient $D\mathcal{E}(\zeta_{\rm e})$ vanishes if
  \begin{equation}
    \label{eq:first_variation}
    \Phi'(0) = -\frac{3}{2},
    \qquad
    D\Psi(0) = 0.
  \end{equation}
  In order to show that $\zeta_{\rm e}$ is stable, it suffices to show that there exist $\Phi$ and $\Psi$ such that \eqref{eq:first_variation} holds and also the Hessian $D^{2}\mathcal{E}(\zeta_{\rm e})$ is positive definite.  
  To that end, set
  \begin{equation*}
    A \defeq \frac{256}{9} \pi R^{2}\, D^{2}\mathcal{E}(\zeta_{\rm e}),
  \end{equation*}
  and let $D^{2}_{ij}\Psi$ denote the second derivative $D_{i}D_{j}\Psi$ with $i, j \in \{1, 2, 3\}$.
  Assuming that $\Phi''(0) = 0$ and $D^{2}_{ij}\Psi(0) = 0$ for $i \neq j$, the leading principal minors $\{ d_{i} \}_{i=1}^{9}$ of $A$ satisfy
  \begin{gather*}
    d_{1} = \Gamma_{1} (\Gamma_{2} + \Gamma_{3} + \Gamma_{4}),
    \qquad
    d_{2} = \Gamma_{1} \Gamma_{2} (\Gamma_{3} + \Gamma_{4}) (\Gamma_{1} + \Gamma_{2} + \Gamma_{3} + \Gamma_{4}),
    \\
    d_{3} = \Gamma_{1} \Gamma_{2} \Gamma_{3} \parentheses{
      \Gamma_{1}^{2} \Gamma_{4}
      + \Gamma_{4} (\Gamma_{2} + \Gamma_{3} + \Gamma_{4})^{2}
      + 2\Gamma_{1} ( \Gamma_{4} (\Gamma_{3} + \Gamma_{4}) + \Gamma_{2}  (2\Gamma_{3} + \Gamma_{4}) )
    },
  \end{gather*}
  and
  \begin{gather*}
    \frac{d_{4}}{d_{3}} = \frac{1}{3} D^{2}_{11}\Psi(0),
    \qquad
    \frac{d_{5}}{d_{2}} = \Gamma_{1} \Gamma_{2} \Gamma_{3} \Gamma_{4}\, D^{2}_{11}\Psi(0),
    \qquad
    \frac{d_{6}}{d_{5}} = \frac{1}{3} D^{2}_{22}\Psi(0),
    \\
    \frac{d_{7}}{d_{1}} = \parentheses{ \Gamma_{1} \Gamma_{2} \Gamma_{3} \Gamma_{4} }^{2} D^{2}_{11}\Psi(0)\, D^{2}_{22}\Psi(0),
    \qquad
    \frac{d_{8}}{d_{7}} = \frac{1}{3} D^{2}_{33}\Psi(0),
    \\
    \frac{d_{9}}{d_{7}/d_{1}} = \Gamma_{1} \Gamma_{2} \Gamma_{3} \Gamma_{4}\, D^{2}_{33}\Psi(0).
  \end{gather*}

  Therefore, if $\Gamma_{i} > 0$ (or $\Gamma_{i} < 0$) for every $i \in \{1, 2, 3, 4\}$, then $d_{k} > 0$ for $k \in \{1, \dots, 9\}$, and hence $A$ is positive definite, provided that $D^{2}_{ii}\Psi(0) > 0$ for every $i \in \{1, 2, 3\}$; for example, one may take
  \begin{equation*}
    \Phi(x) = -\frac{3}{2}x,
    \qquad
    \Psi(y_{1}, y_{2}, y_{3}) = y_{1}^{2} + y_{2}^{2} + y_{3}^{2}
  \end{equation*}
  to satisfy the conditions we impose.
\end{proof}

\section{Conclusion and Outlook}
\subsection{Conclusion}
We found the Hamiltonian formulation for the shape dynamics of $N$ point vortices on the sphere by first lifting the dynamics from $\mathbb{S}^{2}_{R}$ to $\C^{2}$ and then applying $\U(2)$-reduction followed by $\mathbb{T}^{N-1}$-reduction, as opposed to performing the (direct) $\SO(3)$-reduction.

The $\U(2)$-reduction was facilitated by a dual pair found by \citet{SkVi2019} and yields a Lie--Poisson dynamics on the dual of the Lie algebra $\u(N)_{\Gamma}$ defined in \Cref{ssec:U(D_Gamma)-u(N)}, whereas the $\mathbb{T}^{N-1}$-reduction was a standard Poisson reduction.
The resulting shape variables give a parametrization of an ambient space of the ($\SO(3)$-reduced) shape space that is hard to parametrize directly.
As a result, our approach yields a concrete expression for the shape dynamics that is difficult to obtain by the $\SO(3)$-reduction.

We also found a family of Casimirs for the shape dynamics exploiting the Lie--Poisson structure.

As an application, we applied our formulation to the tetrahedron relative equilibrium of non-identical vortices, and proved that it is a stable equilibrium of the shape dynamics if all the circulations have the same sign, generalizing some of existing works on the problem.

\subsection{Outlook}
\label{ssec:outlook}
There are several topics to explore for future work that are suggested by the reviewers.

It is interesting to see if our shape variables facilitate explicit integration for those integrable cases, particularly $N = 3$.

In order to better understand the geometry of the shape dynamics, we need to know the geometry of the coadjoint orbit $\mathcal{O} \subset \mathfrak{u}(N)_{\Gamma}^{*}$.
This is fairly simple if all the circulations have the same sign---in which case $\U(D_{\Gamma})$ is isomorphic to $\U(N)$ (and compact), and so $\mathcal{O}$ is essentially the coadjoint orbit in $\mathfrak{u}(N)^{*}$.
However, the general case requires a classification of the coadjoint orbits of the (non-compact) indefinite unitary group $\U(p,q)$ with signature $(p,q)$.
Also, given the compactness of the shape space, this suggests that the momentum map $\mathbf{N}$ is proper, giving an insight into the structure of $\mathcal{O}$.

The linear instability of the tetrahedron relative equilibrium for the special case $\Gamma_{1} = \kappa < 0$ and $\Gamma_{2} = \Gamma_{3} = \Gamma_{4} = 1$ from \cite{LaMoRo2011} also suggests the possibility that the sufficient condition for stability from \Cref{prop:stability} may be also necessary.

It is also an interesting future work to extend our result on the tetrahedron to other Platonic solids with non-identical vortices in order to generalize the results from \cite{Ku2004} and \cite{MeNeOs2010}.

Our lifting of the dynamics to $\C^{2}$ and the $\mathbb{S}^{1}$-bundle structure is reminiscent of the ``post-classical'' formalism of \citet{Tu2003}.
It is interesting to see if there is any relationship between such a lifting and geometric (pre-)quantization.

\appendix

\numberwithin{equation}{section}
\section{Identification of $\su(2)$ with $\R^{3}$}
\label{sec:su2}
\subsection{Isomorphism between $\su(2)$ and $\R^{3}$}
We define a basis $\{ \tau_{i} \}_{i=1}^{3}$ for $\su(2)$ by setting
\begin{equation*}
  \tau_{1} \defeq
   -\frac{\rmi}{2}
  \begin{bmatrix}
    0 & 1 \\
    1 & 0
  \end{bmatrix},
  \qquad
  \tau_{2} \defeq
  -\frac{\rmi}{2}
  \begin{bmatrix}
    0 & -\rmi \\
    \rmi & 0
  \end{bmatrix},
  \qquad
  \tau_{3} \defeq
  -\frac{\rmi}{2}
  \begin{bmatrix}
    1 & 0 \\
    0 & -1
  \end{bmatrix}
\end{equation*}
so that $[\tau_{i}, \tau_{j}] = \tau_{k}$ for every even permutation $(i,j,k)$ of $(1,2,3)$.
We may then identify $\su(2)$ with $\R^{3}$ via the map
\begin{equation}
  \label{eq:R3-su2}
  f\colon \R^{3} \to \su(2);
  \qquad
  \boldsymbol{\xi} = (\xi_{1}, \xi_{2}, \xi_{3})
  \mapsto
  \sum_{j=1}^{3} \xi_{j} \tau_{j} = -\frac{\rmi}{2}
  \begin{bmatrix}
    \xi_{3} & \xi_{1} - \rmi\xi_{2} \\
    \xi_{1} + \rmi\xi_{2} & -\xi_{3}
  \end{bmatrix}.
\end{equation}
The inner product on $\su(2)$ is inherited from $\u(2)$:
\begin{equation}
  \label{eq:ip-su2}
  \ip{\xi}{\eta} \defeq 2\tr(\xi^{*} \eta) = \boldsymbol{\xi} \cdot \boldsymbol{\eta},
\end{equation}
i.e., it is compatible with the standard dot product in $\R^{3}$ under the above identification.

It is also straightforward to see that the commutator in $\su(2)$ is compatible with the cross product in $\R^{3}$ in the sense that, for all $\boldsymbol{\xi}, \boldsymbol{\eta} \in \R^{3}$,
\begin{equation}
  \label{eq:f-bracket}
  f(\boldsymbol{\xi} \times \boldsymbol{\eta}) = [f(\boldsymbol{\xi}), f(\boldsymbol{\eta})]
  .
\end{equation}
That is, $f$ gives a Lie algebra isomorphism between $\R^{3}$ and $\su(2)$.

\subsection{The $\ad$ and $\ad^{*}$ operators in $\su(2)$ and $\R^{3}$}
The property~\eqref{eq:f-bracket} indicates that
\begin{equation*}
  f(\boldsymbol{\xi} \times \boldsymbol{\eta}) = [\xi, \eta] = \ad_{\xi}\eta.
\end{equation*}
Hence for every $\mu \in \su(2)^{*}$, we have
\begin{equation*}
  \ip{ \ad_{\xi}^{*}\mu }{ \eta }
  = \ip{ \mu }{ \ad_{\xi}\eta }
  = \boldsymbol{\mu} \cdot ( \boldsymbol{\xi} \times \boldsymbol{\eta} )
  = ( \boldsymbol{\mu} \times \boldsymbol{\xi} ) \cdot \boldsymbol{\eta}.
\end{equation*}
So we have
\begin{equation*}
  \ad_{\xi}^{*}\mu = \boldsymbol{\mu} \times \boldsymbol{\xi}
\end{equation*}
under the above identification of $\su(2)$ and $\su(2)^{*}$ with $\R^{3}$.

\section{Recovering Original Dynamics from Lifted Dynamics}
\label{sec:recovering}
This appendix gives a proof of \Cref{prop:reduction_of_lifted_dynamics}.
The main idea is to use a dual pair to find a natural parametrization of the Hopf fibration $\mathbb{S}^{3}_{\sqrt{R}} \to \mathbb{S}^{2}_{R}$.

\subsection{Symplectic Reduction via Dual Pair}
Since the reduction is performed for each copy of $\C^{2}$ in $(\C^{2})^{N}$ separately, we first perform the reduction for a single copy of $\C^{2}$.
Hence the momentum map $\mathbf{J}$ is, dropping the subscripts for $\varphi$,
\begin{equation*}
  \mathbf{J}\colon \C^{2} \to \R;
  \qquad
  \mathbf{J}(\varphi) = -\frac{2}{R}\Gamma \norm{\varphi}^{2}.
\end{equation*}

In order to construct a dual pair, we also define an action of $\SU(2)$ on $\C^{2}$ as follows:
\begin{equation*}
  \SU(2) \times \C^{2} \to \C^{2};
  \qquad
  (U,\varphi) \mapsto U\varphi.
\end{equation*}
For every $\xi \in \su(2)$, its corresponding infinitesimal generator is 
\begin{equation*}
  \xi_{\C^{2}}(\varphi) = \xi \varphi.
\end{equation*}
We identify $\su(2)^{*}$ with $\su(2)$ via the inner product~\eqref{eq:ip-su2} from \Cref{sec:su2}.
Then the associated momentum map $\mathbf{M}\colon \C^{2} \to \su(2)^{*}$ satisfies
\begin{align*}
  \ip{ \mathbf{M}(\varphi) }{ \xi }
  &= \Theta(\varphi) \cdot \xi_{\C^{2}}(\varphi) \\
  &= -\frac{2}{R}\Gamma \Im\parentheses{ \varphi^{*} \xi \varphi } \\
  &= \frac{2}{R} \rmi\, \Gamma \varphi^{*} \xi \varphi \\
  &= \frac{2}{R} \rmi\, \Gamma \tr( \varphi \varphi^{*} \xi ) \\
  &= \frac{2}{R} \tr\parentheses{ (-\rmi \Gamma \varphi \varphi^{*})^{*} \xi } \\
  &= \ip{ -\frac{\rmi}{R} \Gamma \varphi \varphi^{*} }{ \xi }.
\end{align*}
However, since $\mathbf{M}$ takes values in $\su(2)^{*} \cong \su(2)$, we have
\begin{align}
  \mathbf{M}(\varphi)
  &= -\frac{\rmi}{R} \Gamma \parentheses{ \varphi \varphi^{*} - \frac{1}{2}\tr(\varphi \varphi^{*}) I } \nonumber\\
  &= -\frac{\rmi}{R} \Gamma \parentheses{ \varphi \varphi^{*} - \frac{1}{2}\norm{\varphi}^{2} I } \nonumber\\
  &= -\frac{\rmi}{2R} \Gamma
    \begin{bmatrix}
      |z|^{2} - |u|^{2} & 2z\bar{u} \\
      2\bar{z}u & |u|^{2} - |z|^{2}
    \end{bmatrix} \nonumber\\
  &= \frac{\Gamma}{R}\parentheses{
    2\Re(\bar{z}u),\, 2\Im(\bar{z}u),\, |z|^{2} - |u|^{2}
    },
  \label{eq:M}
\end{align}
where we used the identification~\eqref{eq:R3-su2} of $\su(2)$ with $\R^{3}$ from \Cref{sec:su2}.
Notice that the above expression for $\mathbf{M}$ essentially gives the Hopf fibration.

As a result, we have a pair of momentum maps defined on $\C^{2}$:
\begin{equation*}
  \begin{tikzcd}
    \R & \C^{2} \arrow[swap]{l}{\mathbf{J}} \arrow{r}{\mathbf{M}} & \su(2)^{*}.
  \end{tikzcd}
\end{equation*}
This pair of momentum maps is known to constitute a dual pair; see \citet{GoStMa1987} and \citet{HoVi2012}.
This implies that the Marsden--Weinstein quotient $\mathbf{J}^{-1}(-2\Gamma)/\mathbb{S}^{1}$ is symplectomorphic to a coadjoint orbit $\mathcal{O}$ in $\su(2)^{*}$.
More specifically, the momentum map $\mathbf{M}$ restricted to the level set $\mathbf{J}^{-1}(-2\Gamma)$ gives rise to the symplectomorphism $\overline{\mathbf{M}}\colon \mathbf{J}^{-1}(-2\Gamma)/\mathbb{S}^{1} \to \mathcal{O}$, where $\mathcal{O}$ is equipped with the $(+)$-Kirillov--Kostant--Souriau (KKS) symplectic structure (see, e.g., \citet[Chapter~1]{Ki2004} and \citet[Chapter~14]{MaRa1999} and references therein): For every $\mu \in \mathcal{O}$ and $\xi, \eta \in \su(2)$,
\begin{equation*}
  \Omega_{\mathcal{O}}(\mu)\parentheses{ \ad_{\xi}^{*}\mu, \ad_{\eta}^{*}\mu } \defeq \ip{\mu}{[\xi,\eta]},
\end{equation*}
or using the identification between $\su(2)^{*} \cong \su(2)$ with $\R^{3}$ in \eqref{eq:R3-su2},
\begin{equation*}
  \Omega_{\mathcal{O}}(\boldsymbol{\mu})\parentheses{ \boldsymbol{\mu} \times \boldsymbol{\xi}, \boldsymbol{\mu} \times \boldsymbol{\eta} } = \boldsymbol{\mu} \cdot (\boldsymbol{\xi} \times \boldsymbol{\eta})
  \quad
  \text{or}
  \quad
  \Omega_{\mathcal{O}}(\boldsymbol{\mu})( \boldsymbol{v}, \boldsymbol{w} ) = \frac{\boldsymbol{\mu}}{ |\boldsymbol{\mu}|^{2} } \cdot (\boldsymbol{v} \times \boldsymbol{w})
\end{equation*}
In other words, the collectivization by $\mathbf{M}$ coincides with the symplectic reduction by $\mathbb{S}^{1}$.

It is well known that $\mathcal{O}$ is a two-dimensional sphere as well.
One can also see it from the expression~\eqref{eq:M} that if $\varphi \in \mathbf{J}^{-1}(-2\Gamma)$, then $\norm{\varphi} = \sqrt{R}$, and so $\mathbf{M}(\varphi)$, as a vector in $\R^{3}$, is in the sphere with radius $\Gamma$ centered at the origin.
Hence $\mathbf{J}^{-1}(-2\Gamma)/\mathbb{S}^{1}$ is a sphere as well.

In order to show that the reduced dynamics is indeed the point vortex dynamics on $\mathbb{S}^{2}_{R}$, we identify $\mathbf{J}^{-1}(-2\Gamma)/\mathbb{S}^{1}$ with $\mathbb{S}^{2}_{R}$ via
\begin{equation*}
  \overline{\mathbf{M}}\colon \mathbf{J}^{-1}(-2\Gamma)/\mathbb{S}^{1} \cong \mathbb{S}^{2}_{R} \to \mathcal{O} \cong \mathbb{S}^{2}_{\Gamma};
  \qquad
  \mathbf{x} \mapsto \frac{\Gamma}{R} \mathbf{x}.
\end{equation*}
In other words, we are setting $\frac{\Gamma}{R} \mathbf{x} = \mathbf{M}(\varphi)$.
Then, in view of \eqref{eq:M}, we have
\begin{equation*}
  \mathbf{x} = \parentheses{ 2\Re(\bar{z}u),\, 2\Im(\bar{z}u),\, |z|^{2} - |u|^{2} },
\end{equation*}
that is, $\varphi$ and $\mathbf{x}$ are related via the Hopf fibration as in \cite{VaLe2014}.

Pulling back $\Omega_{\mathcal{O}}$ to $\mathbb{S}^{2}_{R}$ by $\overline{\mathbf{M}}$, we obtain
\begin{equation}
  \label{eq:Omega0-S^2}
  \overline{\mathbf{M}}^{*}\Omega_{\mathcal{O}}(\mathbf{x})(\mathbf{v}, \mathbf{w}) = \frac{\Gamma}{R} \mathbf{x} \cdot (\mathbf{v} \times \mathbf{w}),
\end{equation}
which is the area form on $\mathbb{S}^{2}_{R}$ multiplied by $\Gamma$.
The corresponding Poisson bracket is, for all smooth $F, H\colon \R^{3} \to \R$,
\begin{equation*}
  \PB{F}{H}_{\R^{3}}(\mathbf{x}) = \frac{R}{\Gamma} \mathbf{x} \cdot \parentheses{ \pd{F}{\mathbf{x}} \times \pd{H}{\mathbf{x}} }.
\end{equation*}

\subsection{Vortex Dynamics on $\mathbb{S}^{2}_{R}$ and Lie--Poisson Equation}
Now let us come back to the lifted dynamics of $N$ vortices in $(\C^{2})^{N}$.
The above argument applies to each copy of $\C^{2}$, and so we have the momentum map
\begin{equation*}
  \mathbf{M}\colon (\C^{2})^{N} \to (\su(2)^{*})^{N} \cong (\R^{3})^{N};
  \qquad
  (\varphi_{1}, \dots, \varphi_{N}) \mapsto \frac{1}{R}(\Gamma_{1}\mathbf{x}_{1}, \dots, \Gamma_{N}\mathbf{x}_{N}),
\end{equation*}
where, for each $i \in \{1, \dots, N\}$,
\begin{equation}
  \label{eq:C^2-R^3}
  \mathbf{x}_{i} \defeq \parentheses{ 2\Re(\bar{z}_{i}u_{i}),\, 2\Im(\bar{z}_{i}u_{i}),\, |z_{i}|^{2} - |u_{i}|^{2} } \in \R^{3}.
\end{equation}
Since each copy of $\mathbb{S}^{2}_{R}$ is equipped with the symplectic form~\eqref{eq:Omega0-S^2}, this gives rise to the symplectic form~\eqref{eq:Omega-S^2} on $(\mathbb{S}^{2}_{R})^{N}$.
The corresponding  Poisson bracket on $(\R^{3})^{N}$ is then \eqref{eq:PB-R3}.

One then sees that $H_{\R^{3}}\colon (\R^{3})^{N} \to \R$ from \eqref{eq:H-R^3} is the collective Hamiltonian, i.e., $\bar{H} \circ \mathbf{M} = H$ where $H$ is defined in \eqref{eq:H-C^2}.
As a result, the reduced dynamics is the Hamiltonian system~\eqref{eq:vortices_on_sphere-Hamiltonian}.
This completes the proof of \Cref{prop:reduction_of_lifted_dynamics}.

\section{Proof of \Cref{prop:Casimirs}}
\label{sec:Proof-Casimirs}
\begin{enumerate}[(i)]
\item Let $j \in \N$ be arbitrary.
  Let us first show that $C_{j}$ is indeed a real-valued function:
  For every $\lambda \in \u(N)_{\Gamma}^{*}$, we have
  \begin{align*}
    \overline{C_{j}(\lambda)}
    &= \tr\parentheses{ (-\rmi D_{\Gamma} \overline{\lambda})^{j} } \\
    &= \tr\brackets{ \parentheses{ (-\rmi D_{\Gamma} \overline{\lambda})^{T} }^{j} } \\
    &= \tr\parentheses{ (-\rmi \lambda^{*} D_{\Gamma} )^{j} } \\
    &= \tr\parentheses{ (\rmi \lambda D_{\Gamma} )^{j} } \\
    &= \tr\parentheses{ (\rmi D_{\Gamma} \lambda)^{j} } = C_{j}(\lambda).
  \end{align*}
  We also see that $C_{j}$ is $\Ad^{*}$-invariant:
  For every $U \in \U(D_{\Gamma})$, we have
  \begin{align*}
    C_{j}(\Ad_{U}^{*}\lambda)
    &= \tr\parentheses{ (\rmi D_{\Gamma} U^{*} \lambda U)^{j} } \\
    &= \tr\parentheses{ (\rmi U D_{\Gamma} U^{*} \lambda)^{j} } \\
    &= \tr\parentheses{ (\rmi D_{\Gamma} \lambda)^{j} } = C_{j}(\lambda).
  \end{align*}
  Since any $\Ad^{*}$-invariant differentiable function is a Casimir (see, e.g., \citet[Corollary~14.4.3]{MaRa1999}), this implies that $C_{j}$ is a Casimir function for the Lie--Poisson bracket~\eqref{eq:LPB-u(N)^*}.
\item Let us set $A \defeq \rmi D_{\Gamma} \lambda$ so that we have $C_{j}(\lambda) = \tr\parentheses{ A^{j} }$.
  By the Cayley--Hamilton Theorem, we have $p(A) = 0$, where $p$ is the characteristic polynomial of $A$:
  \begin{equation*}
    p(x) \defeq \det\parentheses{ x I - A }
    = x^{N} - c_{1} x^{N-1} - c_{2} x^{N-2} - \dots - c_{N},
  \end{equation*}
  where the coefficients $\{ c_{k}(\lambda) \}_{k=1}^{N}$ are determined by the Faddeev--LeVerrier algorithm (see, e.g., \citet[p.~87]{Ga2000}):
  \begin{align*}
    A_{1} &\defeq A,       & c_{1} &= \tr(A_{1}),              & B_{1} &\defeq A_{1} - c_{1}I, \\
    A_{2} &\defeq A B_{1},  & c_{2} &= \frac{1}{2}\tr(A_{2}),  & B_{2} &\defeq A_{2} - c_{2}I, \\
          && \vdots &&& \\
    A_{N} &\defeq A B_{N-1},  & c_{N} &= \frac{1}{N}\tr(A_{N}). & &
  \end{align*}
  Specifically, this implies that each $c_{j}$ with $j \in \{1, \dots, N\}$ depends on $A$ as a smooth function of $\braces{ \tr\parentheses{ A^{k} } }_{k=1}^{j}$.
  We also obtain the expression
  \begin{equation*}
    c_{N} = \frac{1}{N}\tr(A^{N}) + \dots,
  \end{equation*}
  where the remaining terms do not contain $\tr(A^{N})$.
  Now, taking the trace of
  \begin{equation*}
    p(A) = A^{N} - c_{1} A^{N-1} - c_{2} A^{N-2} - \dots - A_{N} = 0,
  \end{equation*}
  we have
  \begin{equation*}
    \tr\parentheses{ A^{N} } - c_{1} \tr\parentheses{ A^{N-1} } - c_{2} \tr\parentheses{ A^{N-2} } - \dots - c_{N} = 0.
  \end{equation*}
  Since each $c_{j}$ with $j \in \{1, \dots, N-1\}$ depends on $A$ as a smooth function of $\braces{ \tr\parentheses{ A^{k} } }_{k=1}^{j}$, and $c_{N}$ takes the form shown above, $\tr(A^{N})$ can be expressed in terms of $\braces{ \tr\parentheses{ A^{j} } }_{j=1}^{N-1}$.
  It implies that $C_{N}(\lambda)$ can be expressed in terms of $\{ C_{j}(\lambda) \}_{j=1}^{N-1}$.
  This argument extends to $C_{j}(\lambda)$ for $j \ge N+1$ recursively.
\end{enumerate}

\section{Vector Identities in $\C^{2}$ and $\R^{3}$}
Since we use the lifted vortex dynamics in $\C^{2}$ to describe the dynamics in $\mathbb{S}^{2}_{R} \subset \mathbb{R}^{3}$, we make use of some identities that hold between vectors in $\C^{2}$ and those in $\R^{3}$ via the map~\eqref{eq:C^2-R^3}.
This appendix presents detailed derivations of these identities, because the derivations are, although straightforward, quite cumbersome and non-trivial, and also because there does not seem to be proper references on these identities.

\subsection{Vectors in $\C^{2}$ and $\R^{3}$}
Recall that, for $i \in \{1, \dots, N\}$,  we let $\varphi_{i} = (z_{i}, u_{i}) \in \C^{2}$ and set
\begin{equation}
  \tag{\ref{eq:C^2-R^3}}
  \mathbf{x}_{i} \defeq \parentheses{ 2\Re(\bar{z}_{i}u_{i}),\, 2\Im(\bar{z}_{i}u_{i}),\, |z_{i}|^{2} - |u_{i}|^{2} } \in \R^{3}.
\end{equation}
We would like to derive those formulas for vectors in $\C^{2}$ that give some familiar objects in vector algebra in $\R^{3}$.

\subsection{Inner Product in $\C^{2}$ and Dot Product in $\R^{3}$}
\label{ssec:ip-C^2_R^3}
The dot product in $\R^{3}$ is related to the inner product in $\C^{2}$ as follows:
\begin{align}
  \mathbf{x}_{1} \cdot \mathbf{x}_{2}
  &= 4\parentheses{
    \Re(\bar{z}_{1}u_{1}) \Re(\bar{z}_{2}u_{2}) + \Im(\bar{z}_{1}u_{1}) \Im(\bar{z}_{2}u_{2})
    }
    + \parentheses{ |z_{1}|^{2} - |u_{1}|^{2} }\parentheses{ |z_{2}|^{2} - |u_{2}|^{2} } \nonumber\\
  &= 4 \Re(\bar{z}_{1}u_{1}z_{2}\bar{u}_{2})
    + \parentheses{ |z_{1}|^{2} - |u_{1}|^{2} }\parentheses{ |z_{2}|^{2} - |u_{2}|^{2} } \nonumber\\
  &= 2 (\bar{z}_{1}u_{1}z_{2}\bar{u}_{2} + z_{1}\bar{u}_{1}\bar{z}_{2}u_{2})
    + \parentheses{ |z_{1}|^{2} - |u_{1}|^{2} }\parentheses{ |z_{2}|^{2} - |u_{2}|^{2} } \nonumber\\
  &= 2 \parentheses{
    \bar{z}_{1}u_{1}z_{2}\bar{u}_{2} + z_{1}\bar{u}_{1}\bar{z}_{2}u_{2}
    + |z_{1}|^{2}|z_{2}|^{2} + |u_{1}|^{2}|u_{2}|^{2}
    }
    - \parentheses{ |z_{1}|^{2} + |u_{1}|^{2} }\parentheses{ |z_{2}|^{2} + |u_{2}|^{2} } \nonumber\\
  &= 2 (\bar{z}_{1}z_{2} + \bar{u}_{1}u_{2})(z_{1}\bar{z}_{2} + u_{1}\bar{u}_{2})
    - \parentheses{ |z_{1}|^{2} + |u_{1}|^{2} }\parentheses{ |z_{2}|^{2} + |u_{2}|^{2} } \nonumber\\
  &= 2 | \bar{z}_{1}z_{2} + \bar{u}_{1}u_{2} |^{2}
    - \parentheses{ |z_{1}|^{2} + |u_{1}|^{2} }\parentheses{ |z_{2}|^{2} + |u_{2}|^{2} } \nonumber\\
  &= 2 | \varphi_{1}^{*}\varphi_{2} |^{2}
    - \norm{ \varphi_{1} }^{2} \norm{ \varphi_{2} }^{2}.
    \label{eq:dot_products}
\end{align}
Hence we have
\begin{equation*}
  |\mathbf{x}_{1}|^{2} = \norm{ \varphi_{1} }^{4}.
\end{equation*}
This implies that that the three sphere with radius $\sqrt{R}$ is mapped to the two-sphere with radius $R$ (both centered at the origin) under the map~\eqref{eq:C^2-R^3}.

We also have 
\begin{align}
  | \mathbf{x}_{1} - \mathbf{x}_{2} |^{2}
  &= |\mathbf{x}_{1}|^{2} + |\mathbf{x}_{2}|^{2} - 2\mathbf{x}_{1} \cdot \mathbf{x}_{2} \nonumber\\
  &= \norm{ \varphi_{1} }^{4} + \norm{ \varphi_{2} }^{4} - 4 | \varphi_{1}^{*}\varphi_{2} |^{2}
    + 2 \norm{ \varphi_{1} }^{2} \norm{ \varphi_{2} }^{2} \nonumber\\
  &= \left( \norm{\varphi_{1}}^{2} + \norm{\varphi_{2}}^{2} \right)^{2}
    - 4| \varphi_{1}^{*}\varphi_{2} |^{2},
    \label{eq:C^2_distance-R^3}    
\end{align}
and so, if $\varphi_{1}, \varphi_{2} \in \mathbb{S}^{3}_{\sqrt{R}}$, then $\mathbf{x}_{1}, \mathbf{x}_{2} \in \mathbb{S}^{2}_{R}$, and
\begin{equation}
  \label{eq:S^3_distance-S^2}
  | \mathbf{x}_{1} - \mathbf{x}_{2} |^{2}
  = 4\parentheses{ R^{2} - | \varphi_{1}^{*}\varphi_{2} |^{2} }.
\end{equation}

\subsection{Triple Product in $\R^{3}$}
We have
\begin{align*}
  (\varphi_{1}^{*}\varphi_{2}) (\varphi_{3}^{*}\varphi_{1}) (\varphi_{2}^{*}\varphi_{3})
  &= (\bar{z}_{1}z_{2} + \bar{u}_{1}u_{2}) (\bar{z}_{3}z_{1} + \bar{u}_{3}u_{1}) (\bar{z}_{2}z_{3} + \bar{u}_{2}u_{3}) \\
  &= |z_{1}|^{2}|z_{2}|^{2}|z_{3}|^{2} + |u_{1}|^{2}|u_{2}|^{2}|u_{3}|^{2} \\
  &\quad + \sum_{(i,j,k)\in \mathcal{Z}_{3}} \parentheses{
    |z_{i}|^{2}z_{j}\bar{z}_{k}\bar{u}_{j}u_{k} + |u_{i}|^{2}\bar{z}_{j}z_{k}u_{j}\bar{u}_{k}
    },
\end{align*}
where $\mathcal{Z}_{3}$ is the set of all cyclic permutations of $(1,2,3)$, i.e., $\mathcal{Z}_{3} \defeq \{(1,2,3), (2,3,1), (3,1,2)\}$.

However,
\begin{align*}
  |z_{1}|^{2}&|z_{2}|^{2}|z_{3}|^{2} + |u_{1}|^{2}|u_{2}|^{2}|u_{3}|^{2} \\
  &= \prod_{i=1}^{3}\parentheses{ \frac{1}{2}(|z_{i}|^{2} + |u_{i}|^{2}) + \frac{1}{2}(|z_{i}|^{2} - |u_{i}|^{2}) } 
    + \prod_{i=1}^{3}\parentheses{ \frac{1}{2}(|z_{i}|^{2} + |u_{i}|^{2}) - \frac{1}{2}(|z_{i}|^{2} - |u_{i}|^{2}) } \\
  &= \frac{1}{4} \parentheses{
    \prod_{i=1}^{3} (|z_{i}|^{2} + |u_{i}|^{2})
    + \sum_{(i,j,k)\in \mathcal{Z}_{3}}(|z_{i}|^{2} + |u_{i}|^{2}) (|z_{j}|^{2} - |u_{j}|^{2}) (|z_{k}|^{2} - |u_{k}|^{2})
    } \\
  &= \frac{1}{4}\parentheses{
    |\mathbf{x}_{1}| |\mathbf{x}_{2}| |\mathbf{x}_{3}|
    + \sum_{(i,j,k)\in \mathcal{Z}_{3}} |\mathbf{x}_{i}| x_{j}^{3}x_{k}^{3}
    }.
\end{align*}
We also have
\begin{align*}
  \Re(z_{j}\bar{z}_{k}\bar{u}_{j}u_{k})
  &= \Re\parentheses{ \overline{ \bar{z}_{j}{u}_{j} } \bar{z}_{k}u_{k} } \\
  &= \Re(\bar{z}_{j}{u}_{j}) \Re(\bar{z}_{k}u_{k}) + \Im(\bar{z}_{j}{u}_{j}) \Im(\bar{z}_{k}u_{k}) \\
  &= \frac{1}{4} (x_{j}^{1} x_{k}^{1} + x_{j}^{2} x_{k}^{2}) \\
  &= \frac{1}{4} (\mathbf{x}_{j} \cdot \mathbf{x}_{k} - x_{j}^{3} x_{k}^{3})
\end{align*}
and
\begin{align*}
  \Im(z_{j}\bar{z}_{k}\bar{u}_{j}u_{k})
  &= \Im\parentheses{ \overline{ \bar{z}_{j}{u}_{j} } \bar{z}_{k}u_{k} } \\
  &= \Re(\bar{z}_{j}{u}_{j}) \Im(\bar{z}_{k}u_{k}) - \Im(\bar{z}_{j}{u}_{j}) \Re(\bar{z}_{k}u_{k}) \\
  &= \frac{1}{4} (x_{j}^{1} x_{k}^{2} - x_{j}^{2} x_{k}^{1}) \\
  &= \frac{1}{4} (\mathbf{x}_{j} \times \mathbf{x}_{k})^{3},
\end{align*}
where we wrote the components of $\mathbf{x}_{i}$ as $(x_{i}^{1}, x_{i}^{2}, x_{i}^{3})$, and $(\mathbf{x}_{j} \times \mathbf{x}_{k})^{3}$ signifies the third component of $\mathbf{x}_{j} \times \mathbf{x}_{k}$.

As a result, we obtain
\begin{align*}
  \Re\parentheses{
  (\varphi_{1}^{*}\varphi_{2}) (\varphi_{3}^{*}\varphi_{1}) (\varphi_{2}^{*}\varphi_{3})
  }
  &= \frac{1}{4} \parentheses{ |\mathbf{x}_{1}| |\mathbf{x}_{2}| |\mathbf{x}_{3}|
    + \sum_{(i,j,k)\in \mathcal{Z}_{3}} |\mathbf{x}_{i}| x_{j}^{3}x_{k}^{3}
    } \\
  &\quad + \frac{1}{4} \sum_{(i,j,k)\in \mathcal{Z}_{3}} \parentheses{ |z_{i}|^{2} + |u_{i}|^{2} } (\mathbf{x}_{j} \cdot \mathbf{x}_{k} - x_{j}^{3} x_{k}^{3}) \\
  &= \frac{1}{4} \parentheses{
    |\mathbf{x}_{1}| |\mathbf{x}_{2}| |\mathbf{x}_{3}| + \sum_{(i,j,k)\in \mathcal{Z}_{3}}|\mathbf{x}_{i}| \mathbf{x}_{j} \cdot \mathbf{x}_{k}
    }
\end{align*}
and
\begin{align}
  \label{eq:Im(mu_{123})/8}
  \Im\parentheses{
  (\varphi_{1}^{*}\varphi_{2}) (\varphi_{3}^{*}\varphi_{1}) (\varphi_{2}^{*}\varphi_{3})
  }
  &= \frac{1}{4} \parentheses{ |z_{1}|^{2} - |u_{1}|^{2} } (\mathbf{x}_{2} \times \mathbf{x}_{3})^{3} \nonumber\\
  &\quad + \frac{1}{4} \parentheses{ |z_{2}|^{2} - |u_{2}|^{2} } (\mathbf{x}_{3} \times \mathbf{x}_{1})^{3}
    + \frac{1}{4} \parentheses{ |z_{3}|^{2} - |u_{3}|^{2} } (\mathbf{x}_{1} \times \mathbf{x}_{2})^{3} \nonumber\\
  &= \frac{1}{4} \brackets{
    x_{1}^{3} (\mathbf{x}_{2} \times \mathbf{x}_{3})^{3}
    + x_{2}^{3} (\mathbf{x}_{3} \times \mathbf{x}_{1})^{3}
    + x_{3}^{3} (\mathbf{x}_{1} \times \mathbf{x}_{2})^{3}
    } \nonumber\\
  &= \frac{1}{4} \det [\mathbf{x}_{1}\, \mathbf{x}_{2}\, \mathbf{x}_{3}] \nonumber\\
  &= \frac{1}{4} \mathbf{x}_{1} \cdot (\mathbf{x}_{2} \times \mathbf{x}_{3}).
\end{align}

Particularly, if $\mathbf{x}_{i} \in \mathbb{S}^{2}_{R}$ then $\varphi_{i} \in \mathbb{S}^{3}_{\sqrt{R}}$ for $i \in \{1, 2, 3\}$, and so \eqref{eq:dot_products} gives, for all $i, j \in \{1, 2, 3\}$,
\begin{equation*}
  \mathbf{x}_{i} \cdot \mathbf{x}_{j} = 2| \varphi_{i}^{*}\varphi_{j} |^{2} - R^{2}.
\end{equation*}
So we have
\begin{equation}
  \label{eq:Remu_123}
  \begin{split}
    \Re\parentheses{
       (\varphi_{1}^{*}\varphi_{2}) (\varphi_{3}^{*}\varphi_{1}) (\varphi_{2}^{*}\varphi_{3})
     }
    &= \frac{R}{4} \parentheses{
      R^{2} + \mathbf{x}_{1} \cdot \mathbf{x}_{2} + \mathbf{x}_{3} \cdot \mathbf{x}_{1} + \mathbf{x}_{2} \cdot \mathbf{x}_{3}
    } \\
    &= \frac{R}{2} \parentheses{
      \abs{ \varphi_{1}^{*}\varphi_{2} }^{2}
      + \abs{ \varphi_{3}^{*}\varphi_{1} }^{2}
      + \abs{ \varphi_{2}^{*}\varphi_{3} }^{2} - R^{2}
    }.
  \end{split}
\end{equation}

\section*{Acknowledgments}
I would like to thank Melvin Leok for helpful comments and discussions, and the reviewers for their comments and suggestions including the possible future work discussed in \Cref{ssec:outlook}.
This work was supported by NSF grant DMS-2006736.

\bibliography{Point_Vortices-Sphere}
\bibliographystyle{plainnat}

\end{document}